\title{Faster Algorithms for Bounded Tree Edit Distance}
\author{Shyan Akmal}{MIT EECS and CSAIL, USA }{naysh@mit.edu}{https://orcid.org/0000-0002-7266-2041}{Supported by NSF Grant CCF-1909429.}
\author{Ce Jin}{MIT EECS and CSAIL, USA}{cejin@mit.edu}{}{Supported by an MIT Akamai Presidential Fellowship}
\authorrunning{S.\ Akmal and C.\ Jin} 
\keywords{tree edit distance, edit distance, dynamic programming} 
\renewcommand{\epsilon}{\varepsilon}
\newcommand{\eps}{\varepsilon}
\renewcommand{\tilde}{\widetilde}
\newcommand{\ed}{\mathsf{ed}}
\newcommand{\size}{\mathsf{size}}
\newcommand{\lca}{\mathsf{LCA}}
\newcommand{\pre}{\mathsf{pre}}
\newcommand{\parent}{\mathsf{par}}
\newcommand{\grp}[1]{\left(#1\right)}
\begin{document}

\maketitle

\begin{abstract}
\emph{Tree edit distance} is a well-studied measure of dissimilarity between rooted trees with node labels.
It can be computed in $O(n^3)$ time [Demaine, Mozes, Rossman, and Weimann, ICALP 2007], and fine-grained hardness results suggest that the weighted version of this problem cannot be solved in truly subcubic time unless the APSP conjecture is false [Bringmann, Gawrychowski, Mozes, and Weimann, SODA 2018].

We consider the \emph{unweighted} version of tree edit distance, where every insertion, deletion, or relabeling operation has unit cost. 
Given a parameter $k$ as an upper bound on the distance, the previous fastest algorithm for this problem runs in $O(nk^3)$ time [Touzet, CPM 2005], which improves upon the cubic-time algorithm for $k\ll n^{2/3}$. 
In this paper, we give a faster algorithm taking $O(nk^2 \log n)$ time, improving both of the previous results for almost the full range of $\log n \ll k\ll n/\sqrt{\log n}$. 
\end{abstract}

\section{Introduction}
\label{sec:intro}
Many tasks involve measuring the similarity between two sets of data.
When the data is naturally represented as a string of characters, one of the most popular and well-studied ways of measuring similarity is via the (string) edit distance, defined to be the minimum number of characters that must be deleted, inserted, and substituted to turn one string into the other.
Although edit distance is a fundamental problem in computer science and has been employed to great effect in many other areas, it can be less useful
for applications where we are interested in comparing data that is not just linearly ordered, but has some hierarchical organization.
When the data admits a tree structure, a natural measure of similarity is the \emph{tree edit distance}, first introduced by Tai \cite{Tai79} as a generalization of the string edit distance problem \cite{WagnerF74}.
Computing this metric has a wide variety of applications in a diverse array of  fields including computational biology \cite{gusfield_1997,10.1093/bioinformatics/6.4.309,HochsmannTGK03,waterman1995introduction}, structured data analysis \cite{KochBG03,Chawathe99,FerraginaLMM09}, and image processing \cite{BellandoK99,KleinTSK00,KleinSK01,SebastianKK04}.

Given two \emph{rooted ordered} trees with node labels, the tree edit distance is the minimum number of node deletions, insertions, and relabelings needed to turn one tree into the other.
When we delete a node, its children become children of the parent of the deleted node. 
Beyond this widely studied definition, there are many other variants of the tree edit distance problem, including those defined for unrooted trees or unordered trees, or parameterized by the depth or the number of leaves, which we do not consider in this paper. We refer interested readers to the survey by Bille \cite{Bille05survey} for a comprehensive review.

We now recount the development of exact algorithms for tree edit distance.
In 1979, Tai \cite{Tai79} gave the first algorithm that computes the tree edit distance between two node-labeled rooted trees on $n$ nodes in $O(n^6)$ time. 
The time complexity was improved to $O(n^4)$ by Zhang and Shasha \cite{ZhangS89} using a dynamic programming approach. 
Later, Klein \cite{Klein98} applied the heavy-light decomposition technique to obtain an $O(n^3 \log n)$ time algorithm. 
Finally, Demaine, Mozes, Rossman, and Weimann \cite{demaine2009} improved the running time by a log-factor to $O(n^3)$,
and further showed that this running time is optimal among a certain class of dynamic programming algorithms termed \emph{decomposition strategy algorithms} by Dulucq and Touzet \cite{DulucqT03,DulucqT05}. 
When the two input trees have different sizes $m\le n$, their algorithm runs in $O\left (nm^2(1+\log \frac{n}{m})\right )$ time. 

All algorithms mentioned above actually compute tree edit distance in the general \emph{weighted} setting where the cost of deleting, inserting, or relabeling is a function of the labels (so that deleting nodes with certain labels might be cheaper than deleting other nodes with different labels). 
In this setting,
Bringmann, Gawrychowski, Mozes, and Weimann \cite{apsphard2020} showed conditional hardness results for the tree edit distance problem: a truly subcubic time algorithm for this problem would imply a truly subcubic time algorithm for the All-Pairs Shortest Paths (APSP) problem (assuming alphabet of size $\Theta(n)$), and an $O(n^{k(1-\eps)})$ time algorithm for the Max-weight $k$-clique problem (assuming a sufficiently large constant-size alphabet). 
However, the instances produced by their fine-grained reduction have non-unit edit costs, and 
it is not clear yet how to prove a conditional hardness result for the \emph{unweighted} tree edit distance problem with unit edit costs. 
In contrast, the quadratic-time fine-grained lower bound for the \emph{string} edit distance problem (based on the Strong Exponential Time Hypothesis) holds for unit-cost operations \cite{BackursI18,AbboudBW15}.

Therefore, it is natural to consider the unweighted unit-cost setting, where every elementary operation has cost 1, independent of the labels. 
In this case, the distance between two trees of sizes $n$ and $m$ cannot be larger than $n+m$, and is arguably even smaller in practical scenarios. 
In 2005, Touzet \cite{Touzet05,Touzet07} gave an algorithm in this context that computes the unweighted tree edit distance in $O(nk^3)$ time, assuming the distance is at most $k$. 
When $k= \Theta(n)$, Touzet's algorithm has the same performance as the $O(n^4)$ time algorithm by Zhang and Shasha \cite{ZhangS89}.
However, the running time significantly improves if the upper bound $k$ is much smaller than $n$. (There are also algorithms that run faster when the input trees have low depth or few leaf nodes, e.g. \cite{DBLP:conf/cikm/0001A20})
We remark that similar progress was shown earlier for the string edit distance problem: although the best known running time for the general case is $O(n^2/\log^2 n)$ \cite{MasekP80,tcs/BilleF08}, when the distance is at most $k$, Ukkonen \cite{Ukkonen85} gave an $O(nk)$ time algorithm, which was later improved to $\tilde O(n+k^2)$ time\footnote{In this paper, $\tilde O(f)$ stands for $f\cdot (\log f)^{O(1)}$.} by Myers \cite{Meyers86}, Landau and Vishkin \cite{LandauV88} using suffix trees.

Although we focus on exact algorithms in this work, 
approximation algorithms for the tree edit distance problem have also been studied \cite{AkutsuFT10,apxfocs2019}. Boroujeni, Ghodsi, Hajiaghayi, and Seddighin \cite{apxfocs2019} showed an algorithm that computes a $(1+\eps)$-approximation of the tree edit distance in $\tilde O(\eps^{-3} n^2)$ time.
If an upper bound $k$ on the distance is known, the running time can be improved to $\tilde O(\eps^{-3} nk)$.
For the easier problem of approximating string edit distance, there is a longer line of research \cite{AndoniO12,AndoniKO10,BoroujeniEGHS18,ChakrabortyDGKS18,BrakensiekR20,KouckyS20} culminating in a near-linear time constant-factor approximation algorithm \cite{AndoniN20}.


\subsection{Our contribution}
We present a faster algorithm for exactly computing the unweighted tree edit distance (where every elementary operation has unit cost), with a parameter $k \le O(n)$ given as an upper bound on the distance.
\begin{theorem}
\label{thm:bounded-ted}
Given two node-labeled rooted trees $T_1, T_2$ each of size at most $n$, we can compute the unweighted tree edit distance between $T_1$ and $T_2$ exactly in $O(n k^2 \log n)$ time, assuming the distance is at most $k$.
\end{theorem}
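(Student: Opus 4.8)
The plan is to work inside the \emph{decomposition-strategy} dynamic programming framework of Zhang--Shasha, Klein, and Demaine et al.\ \cite{ZhangS89,Klein98,demaine2009}, specialized to small distances in the style of Touzet \cite{Touzet05}. Recall that in this framework one computes $\ed(F,G)$ for certain pairs of contiguous \emph{subforests} $F$ of $T_1$ and $G$ of $T_2$, via a recursion that at each step picks one of the two forests and either deletes or matches its leftmost or rightmost root (following a fixed \emph{strategy}); each step is a minimum over $O(1)$ options, each of which is a sum of at most two previously computed subproblems. I would use Klein's heavy-path strategy on $T_1$, under which only $O(n\log n)$ distinct subforests of $T_1$ ever arise and the unrestricted algorithm runs in $O(n^3\log n)$ time. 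Thus the running time is $O(1)$ times the number of subproblems the recursion actually touches, provided memoized values can be looked up in $O(1)$ time, and the whole task reduces to bounding that number by $O(nk^2\log n)$.

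First I would pin down which subproblems are genuinely needed when $\ed(T_1,T_2)\le k$. Along any branch of the recursion from $(T_1,T_2)$ down to a pair $(F,G)$, the difference between the number of nodes already removed from $T_1$ and the number removed from $T_2$ lower-bounds the cost incurred so far, and $\bigl||F|-|G|\bigr|$ lower-bounds $\ed(F,G)$; so only subproblems in which $F$ and $G$ are ``close in shape'' and anchored near a common diagonal can contribute. Following the outline, I would state two pruning rules phrased in terms of such quantities (sizes of $F$ and $G$, sizes of the already-removed portions of $T_1$ and $T_2$, and related edit distances) and prove that running the recursion only on subproblems satisfying both rules still returns $\ed(T_1,T_2)$ exactly --- that is, the kept set over-approximates the set of subproblems the unpruned recursion would reach. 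These rules are correct but not directly implementable, since they refer to edit distances not yet available.

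The new ingredient is a third rule $Q$ that depends only on cheap integer data attached to a subproblem: the ambient subtrees of $T_1$ and $T_2$ containing $F$ and $G$, and how far $F$ and $G$ have been eroded along their left and right spines. I would prove $Q$ is \emph{safe} by showing that every subproblem $Q$ discards is already discarded by one of the first two rules, so the $Q$-pruned recursion still computes the answer; then I would bound the survivors directly. In a given recursive context Klein's strategy erodes $F$ along only one spine, so a surviving $F$ has essentially one free coordinate and there are $O(n\log n)$ of them in total; a partner $G$ may be eroded along both spines, giving two free coordinates, each of which $Q$ confines to a window of length $O(k)$, so each $F$ has $O(k^2)$ surviving partners. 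This yields $O(nk^2\log n)$ subproblems, and with $O(1)$ work apiece the theorem follows. (For contrast, a diagonal-band strategy in the spirit of Touzet has $\Theta(nk)$ relevant subforests of $T_1$, each with $\Theta(k^2)$ partners, which is where the $O(nk^3)$ bound comes from; the gain here is replacing the band on $T_1$ by Klein's decomposition while keeping a band-type restriction only on $T_2$.)

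I expect the main obstacle to be the combination of the counting argument with making it algorithmic. One must show not merely that few subproblems survive $Q$, but that the recursion can be driven so as to visit exactly those subproblems --- recognizing the base cases among them --- without spending any time on pruned subproblems, and that each subproblem is named by $O(1)$ integers from bounded ranges, so that its memoized value lives in a plain array and no extra $\log n$ factor sneaks in through hashing. Making $Q$ simultaneously safe (it drops nothing the first two rules would keep) and tight (only $O(k^2)$ partners per $F$ survive) is exactly where those two auxiliary rules earn their keep, and getting the interaction between the heavy-path recursion on $T_1$ and the band restriction on $T_2$ precisely right is the delicate technical heart of the argument.
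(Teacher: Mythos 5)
Your high-level plan coincides with the paper's: run Klein's heavy-path recursion on $T_1$ so that only $O(n\log n)$ subforests $F_1$ arise, set up cost-based pruning rules that over-approximate the reachable subproblems, replace them with a cheaply checkable rule implied by them, and show that each $F_1$ retains only $O(k^2)$ partners $F_2$. The gap is that this last step --- the only genuinely new combinatorial content the theorem needs --- is asserted rather than proved, and the natural way to read your assertion is false. You say a partner $G$ ``may be eroded along both spines, giving two free coordinates, each of which $Q$ confines to a window of length $O(k)$, so each $F$ has $O(k^2)$ surviving partners.'' The two coordinates that uniquely name a subforest of $T_2$ are the number $a$ of leftmost-root removals and the number $b$ of rightmost-root removals (as in \Cref{prop:n2}), but the cost lower bounds do \emph{not} confine $a$ to an $O(k)$ window. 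The recursion deletes three qualitatively different kinds of nodes from around $G$: nodes strictly to the left of $G$, nodes strictly to the right, and \emph{ancestors} of the surviving forest (deleted each time a root is matched and the recursion descends into its subtree). The transition costs control the left and right counts (this is \Cref{lm:inequality LU RU}) and the total size of $G$ (\Cref{prop:rule2}), but $a$ equals the left count \emph{plus} the number of deleted ancestors, which can be large even when $k=0$ (picture a long path of unary nodes): there is a residual one-parameter family of subforests, all with identical left and right counts, obtained by repeatedly deleting the unique remaining root along a chain.

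The paper closes exactly this hole in \Cref{lm:abc}: it shows that the left window, the right window, and the size window \emph{together} leave only $O(k^2)$ subforests, by explicitly enumerating the candidates, isolating the ambiguous ``single remaining root'' situation, parameterizing those extra subforests by their lowest common ancestor, and confining that ancestor to $O(k)$ positions using the size constraint plus a comparison of the two extreme candidates. Your proposal never mentions the deleted-ancestor nodes at all, so the counting step does not go through as written; you would either need the paper's three-part decomposition of $T_2\setminus G$ and an analogue of \Cref{lm:abc}, or a separate argument bounding the discrepancy in the number of deleted ancestors between $G$ and its partner $F$. (Your concern about $O(1)$-time lookup is legitimate but minor: each state is named by $O(1)$ small integers, e.g.\ the erosion counts, so a plain array suffices and no extra logarithmic factor appears.)
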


When the distance parameter $k$ is constant our algorithm runs in quasilinear time, and as $k$ reaches its upper bound $O(n)$ we recover the $O(n^3\log n)$ time algorithm by Klein \cite{Klein98}.
Our algorithm outperforms the $O(n^3)$ time algorithms of Demaine et al. \cite{demaine2009} when $k = o(n/\sqrt{\log n})$.
As mentioned earlier, the previous best algorithm for bounded tree edit distance by Touzet \cite{Touzet07} takes $O(nk^3)$ time. 
The time complexity of our algorithm improves upon this prior work whenever $k = \omega(\log n)$.

\subsection{High-level Overview}

Touzet's $O(nk^3)$-time algorithm is based on Zhang and Shasha's $O(n^4)$-time dynamic programming algorithm \cite{ZhangS89}. The improvement was achieved by pruning unuseful DP states, and only considering $O(nk^3)$ many states instead of $O(n^4)$. 
This pruning technique was inspired by an idea used in the previous $O(nk)$-time algorithm for string edit distance \cite{Ukkonen85}: for input strings whose edit distance is at most $k$, when building the dynamic programming table for computing the edit distance, it suffices to only compute entries of the table corresponding to prefixes whose lengths differ by at most $k$.
Touzet's improvement for tree edit distance employs a similar technique and relies on measuring the ``distance'' between two DP states with respect to the preorder tree traversal, which is compatible with the DP transitions of Zhang and Shasha.

We modify Klein's $O(n^3 \log n)$ time algorithm by further reducing the number of useful states, similar in spirit to the algorithm by Touzet \cite{Touzet07}. 
The main difficulty in adapting this idea is that unlike the algorithm of Zhang and Sasha, Klein's DP algorithm does not follow the same preorder traversal of the nodes.
Hence we need completely new arguments to bound the number of useful DP states.
Beyond considering the sizes of the subproblems generated,
our proofs examine how various subforests are generated by different transition rules and employ some combinatorial arguments about how the subgraphs of deleted nodes can be structured when the edit distance is known to be bounded.

\subsection{Paper Organization}
In \Cref{sec:prelim} we formally define the tree edit distance problem and introduce the notation used throughout the rest of the paper. 
Next, in \Cref{sec:klein}, we review Klein's algorithm \cite{Klein98} which our algorithm builds off of. 
Then, in \Cref{sec:improve}, we present our improved algorithm.
Finally, we conclude by mentioning several open questions relevant to our work in \Cref{sec:open}.

\section{Preliminaries}
\label{sec:prelim}

%
In this paper, we consider rooted trees that are \emph{ordered}, meaning that the order between siblings is significant. 
We also consider \emph{forests} consisting of disjoint rooted trees, where the order between these trees is also significant. It is convenient to treat the tree roots of a forest as the children of a virtual root node.
Let $\parent(v)$ denote the parent node of $v$, or the virtual root node if $v$ is a tree root in the forest.

We define the \emph{node removal operation} in the following natural way: after removing a node $v$ from the forest $F$, the children of $v$ become children of $\parent(v)$, preserving the same relative order.
We use $F-v$ to denote the forest obtained by removing $v$ from $F$. 

We now formally define the \emph{tree edit distance} as a metric on ordered rooted trees with node labels.
\begin{definition}[(Unweighted) Tree Edit Distance]
\label{defn:unrooted-ted}
Let $T_1$ and $T_2$ be two ordered rooted trees whose nodes are labeled with symbols from some alphabet $\Sigma$. 
There are two types of allowed operations:
\begin{itemize}
    \item Relabeling: change the label of a node from one symbol in $\Sigma$ to another.
    \item Deletion: remove a node. 
\end{itemize}
Then the tree edit distance between $T_1$ and $T_2$, denoted by $\ed(T_1,T_2)$, is the minimum number of operations that must be performed on $T_1$ and $T_2$ to obtain two identical forests.
\end{definition}

\Cref{fig:example-ted} provides an example of these operations in action.

\begin{figure}[t]
\centering
\def\svgwidth{\linewidth}
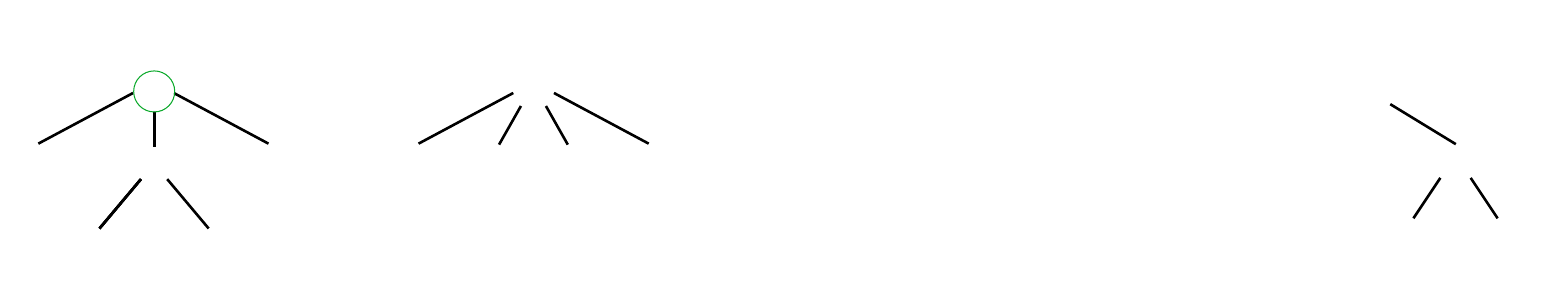
\caption{To turn $T_1$ and $T_2$ into the same tree with a minimum number of operations, we can delete a node from each and relabel a node in $T_1$. 
So in this example $\ed(T_1, T_2) = 3$.}
\label{fig:example-ted}
\end{figure}

\begin{remark}
An alternative definition of tree edit distance is the minimum number of insertions, deletions, and relabeling needed to turn one tree into the other. It is easy to see that these two definitions are equivalent.
\end{remark}

Since the operations of relabeling and deletion also apply to labeled forests, the above definition naturally extends to measure the edit distance between two \emph{forests} $F_1$ and $F_2$, and for the rest of the paper we write $\ed(F_1,F_2)$ to denote this edit distance as well.

\begin{figure}[t]
\centering
\def\svgwidth{0.6\linewidth}
\begingroup%
  \makeatletter%
  \providecommand\color[2][]{%
    \errmessage{(Inkscape) Color is used for the text in Inkscape, but the package 'color.sty' is not loaded}%
    \renewcommand\color[2][]{}%
  }%
  \providecommand\transparent[1]{%
    \errmessage{(Inkscape) Transparency is used (non-zero) for the text in Inkscape, but the package 'transparent.sty' is not loaded}%
    \renewcommand\transparent[1]{}%
  }%
  \providecommand\rotatebox[2]{#2}%
  \newcommand*\fsize{\dimexpr\f@size pt\relax}%
  \newcommand*\lineheight[1]{\fontsize{\fsize}{#1\fsize}\selectfont}%
  \ifx\svgwidth\undefined%
    \setlength{\unitlength}{243.77952756bp}%
    \ifx\svgscale\undefined%
      \relax%
    \else%
      \setlength{\unitlength}{\unitlength * \real{\svgscale}}%
    \fi%
  \else%
    \setlength{\unitlength}{\svgwidth}%
  \fi%
  \global\let\svgwidth\undefined%
  \global\let\svgscale\undefined%
  \makeatother%
  \begin{picture}(1,0.60465116)%
    \lineheight{1}%
    \setlength\tabcolsep{0pt}%
    \put(0,0){\includegraphics[width=\unitlength,page=1]{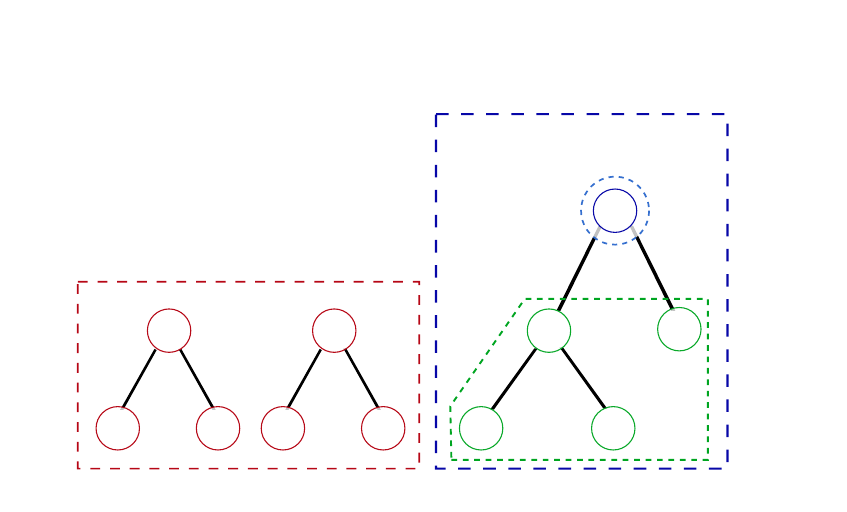}}%
    \put(0.25882532,0.31191577){\color[rgb]{0,0,0}\makebox(0,0)[lt]{\lineheight{1.25}\smash{\begin{tabular}[t]{l}$L'_F$\end{tabular}}}}%
    \put(0.65877724,0.50881501){\color[rgb]{0,0,0}\makebox(0,0)[lt]{\lineheight{1.25}\smash{\begin{tabular}[t]{l}$R_F$\end{tabular}}}}%
    \put(0.54802132,0.26884401){\color[rgb]{0,0,0}\makebox(0,0)[lt]{\lineheight{1.25}\smash{\begin{tabular}[t]{l}$R_F^\circ$\end{tabular}}}}%
    \put(0.68954273,0.41651857){\color[rgb]{0,0,0}\makebox(0,0)[lt]{\lineheight{1.25}\smash{\begin{tabular}[t]{l}$r_F$\end{tabular}}}}%
  \end{picture}%
\endgroup%

\caption{The example forest $F$ above is partitioned into $L'_F, r_F$, and $R_F^\circ$.}
\label{fig:example-lr}
\end{figure}

Given a forest $F$, we write $L_F$ (or $R_F$) to denote the leftmost (or rightmost) tree in $F$, and write $\ell_F$ (or $r_F$) to denote the root of $L_F$ (or $R_F$).
For convenience, let $L'_F$ denote $F-R_F$, and let $R_F^{\circ}$ denote $R_F-r_F$ (similarly, $R'_F=F-L_F$ and $L_F^{\circ}=L_F-\ell_F$). Hence, the nodes of a nonempty forest $F$ can be partitioned into three parts: $L'_F$, $r_F$, and $R_F^{\circ}$ (an example is given in \Cref{fig:example-lr}).
Finally, $\size(F)$ or $|F|$ denote the number of nodes in $F$ (where $F$ can also be any subset of nodes).

\begin{definition}[Subforest]
Given a rooted tree $T$, we say $F$ is a \emph{subforest} of $T$ if we can obtain $F$ from $T$ by repeatedly deleting the leftmost or rightmost root. 
\end{definition}
\begin{figure}[t]
\centering
\def\svgwidth{80pt}
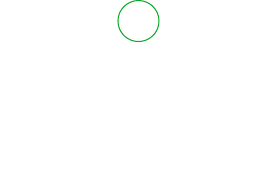
\caption{The subforests of this rooted tree are: $\{1,2,3,4,5,6\}, \{2,3,4,5,6\},\{3,4,5,6\},\{4,5,6\}$, $\{5,6\},\{6\},\emptyset$,$\{2,3,4,6\},\{3,4,6\},\{4,6\},\{2,3,4\},\{3,4\},\{4\},\{2,3\},\{3\}$. For example, the subforest $\{3,4,6\}$ can be obtained by first removing the leftmost root $1$, then removing the rightmost root $5$, and finally removing the leftmost root $2$.}
\label{fig:example-subforest}
\end{figure}
An example illustrating the definition of subforests is given in \Cref{fig:example-subforest}.
\begin{proposition}
\label{prop:n2}
A rooted tree $T$ of $n$ nodes has at most $O(n^2)$ subforests.
\end{proposition}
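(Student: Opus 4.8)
The plan is to encode the node set of any subforest of $T$ using two linear orders on $V(T)$. Let $p(v)$ denote the preorder index of $v$ (the DFS that visits a node before recursing into its children from left to right) and let $q(v)$ denote the \emph{reverse preorder} index (the same DFS but recursing into children from right to left); both $p$ and $q$ are bijections $V(T) \to \{1,\dots,n\}$. I claim that the node set of every subforest of $T$ equals
\[
S_{a,b} := \{\, v \in V(T) : p(v) > a \text{ and } q(v) > b \,\}
\]
for some integers $a,b$ with $0 \le a,b \le n$. Granting this, there are at most $(n+1)^2 = O(n^2)$ candidate node sets; and since a subforest of $T$ is determined by its node set (the parent of a present node is its nearest present proper ancestor in $T$, and the sibling order is inherited from $T$), the proposition follows.

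To prove the claim I would induct on the number of leftmost/rightmost root removals used to obtain the subforest. The base case is $T$ itself, whose node set is $S_{0,0}$. For the inductive step, suppose the current forest $F$ has node set $S_{a,b}$, and consider removing its leftmost root (the rightmost case is symmetric under swapping $p \leftrightarrow q$ and left $\leftrightarrow$ right). The key structural fact is that the leftmost root of $F$ is exactly the node $v^\star \in S_{a,b}$ minimizing $p$: every ancestor of $v^\star$ in $T$ has strictly smaller preorder index and hence is absent from $S_{a,b}$, so $v^\star$ is a root of $F$; moreover, for two present nodes lying in different trees of $F$ the left-to-right order of those trees agrees with the preorder of the nodes (their lowest common ancestor in $T$ is absent, and preorder visits the corresponding child-subtrees from left to right), so $v^\star$ lies in the leftmost tree and is therefore its root. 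After removing $v^\star$ the node set becomes $S_{a,b}\setminus\{v^\star\}$, and I would verify this equals $S_{p(v^\star),\,b}$: the inclusion $S_{p(v^\star),b} \subseteq S_{a,b}\setminus\{v^\star\}$ is immediate since $p(v^\star) > a$, while conversely any $u$ with $a < p(u) < p(v^\star)$ and $q(u) > b$ would lie in $S_{a,b}$ and contradict the minimality of $p(v^\star)$, so no such $u$ exists and the reverse inclusion holds. This preserves the invariant, completing the induction.

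The part that needs care is the structural fact identifying the minimum-preorder present node with the leftmost root (and symmetrically $q$ with the rightmost root), together with the bookkeeping observation that $a$ may jump by more than $1$ in a single step — precisely when some preorder values just above $a$ have already been eliminated from the $q$-side. Once these points are settled, the count of distinct sets $S_{a,b}$, and hence of subforests, is immediately $O(n^2)$.
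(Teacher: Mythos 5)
Your proof is correct and takes essentially the same route as the paper: your sets $S_{a,b}$ are precisely the subforests obtained by first removing the leftmost root $a$ times and then the rightmost root $b$ times, which is the normal form the paper uses (with its $a=\pre(u)-1$ matching your preorder threshold and $b=n-a-\size(F)$ matching your reverse-preorder threshold). The only difference is that the paper asserts this normal form with an ``it is not hard to see,'' while you carry out the inductive verification explicitly.
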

\begin{proof}
Although a subforest may result from interleaving operations of removing the leftmost root and removing the rightmost root, it is not hard to see that every such subforest $F$ can also be obtained from $T$ by first removing the leftmost root $a$ times, and then removing the rightmost root $b$ times, for some nonnegative integer $a,b$ with $a+b\le n$. Specifically, let $u$ be the node in $F$ with the smallest index $\pre(u)$ in the \emph{preorder traversal} of $T$ ($1\le \pre(u)\le n$), and we can set $a=\pre(u)-1$ and $b = n-a-\size(F)$.  The claim then follows from the number of choices of $(a,b)$.
\end{proof}

For a subforest $F$ of $T$, define $\lca_T(F)$ as the lowest common ancestor in $T$ of all nodes in $F$. 
When the identity $T$ is clear from context, we may write $\lca(F)$ and leave the underlying tree implicit.
Observe that $\lca(F)$ is in $F$ precisely when $F$ is a subtree of $T$.

Throughout, we use $T_1,T_2$ to denote the input trees (or $T$ if we do not specify which one of the two) we want to compute the edit distance between.

\section{Review of Klein's Algorithm}
\label{sec:klein}

We briefly review Klein's algorithm \cite{Klein98} in the context of computing the unweighted tree edit distance $\ed(T_1,T_2)$ (see \cite{demaine2009,Bille05survey,unrooted2018} for other overviews of this algorithm). 

The algorithm uses dynamic programming (DP) over pairs $(F_1,F_2)$, where $F_1, F_2$ are subforests of $T_1,T_2$, respectively. 
 Let the node relabeling cost $\delta(x,y) = 1$ if nodes $x,y$ have different labels, and $\delta(x,y) = 0$ otherwise. 
Then $\ed(F_1,F_2)$ can be computed recursively as follows \cite{ZhangS89}:
    
    \begin{itemize}
        \item 
   The base case is where either of $F_1,F_2$ is empty (denoted as $\emptyset$), and we have
    \begin{equation}
    \label{eq:trivial}
    \ed(F_1, \emptyset) = \size(F_1), \ed(\emptyset, F_2) = \size(F_2).
    \end{equation}
    
    \item 
    When both $F_1,F_2$ are nonempty, if $\size(L_{F_1}) > \size(R_{F_1})$, then we recurse with
    \begin{equation}
    \label{eq:right-recurse}
    \ed(F_1, F_2)  = \min
    \begin{cases}
        \ed(F_1-r_{F_1},F_2)+1 \\
        \ed(F_1, F_2-r_{F_2})+1 \\
        \ed(R^\circ_{F_1},R^\circ_{F_2}) + \ed(L'_{F_1}, L'_{F_2}) + \delta(r_{F_1},r_{F_2}).
    \end{cases}
    \end{equation}
    \item 
    Otherwise, $\size(L_{F_1}) \le \size(R_{F_1})$, and we recurse with
        \begin{equation}
    \label{eq:left-recurse}
    \ed(F_1, F_2)  = \min
    \begin{cases}
        \ed(F_1-\ell_{F_1},F_2)+1 \\
        \ed(F_1, F_2-\ell_{F_2})+1 \\
        \ed(L^\circ_{F_1},L^\circ_{F_2}) + \ed(R'_{F_1}, R'_{F_2}) + \delta(\ell_{F_1},\ell_{F_2}).
    \end{cases}
    \end{equation}
    \end{itemize}
    
Taking \Cref{eq:right-recurse} as an example, the recursion considers three options concerning the rightmost roots of $F_1,F_2$: (1) $r_{F_1}$ is removed. (2) $r_{F_2}$ is removed. (3) The two roots are matched to each other, generating two subproblems of matching their subtrees $R_{F_1}^\circ, R_{F_2}^\circ$, and matching the remaining parts $L'_{F_1},L'_{F_2}$. The other recursion rule in \Cref{eq:left-recurse} is symmetric and considers the leftmost roots.

We can easily verify that, if we compute $\ed(T_1,T_2)$ using this recursion, the DP states visited by the recursion are indeed pairs of subforests of $T_1$ and $T_2$.
We call a subforest $F_1$ or $F_2$ which appears in the above dynamic programming procedure a \emph{relevant} subforest.
Klein showed the following bound on the number of relevant subforests $F_1$ of $T_1$ generated by the DP procedure.
\begin{lemma}[Lemma 3 of \cite{Klein98}]
\label{lm:heavy-light}
    If we use top-down dynamic programming to compute $\ed(T_1,T_2)$ with respect to the recursion defined in \Cref{eq:trivial,eq:right-recurse,eq:left-recurse},
    we only ever need to compute $\ed(F_1, F_2)$ for $O(|T_1|\log |T_1|)$ distinct subforests $F_1$ of $T_1$.
\end{lemma}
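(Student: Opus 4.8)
The plan is to analyze the recursion tree and show that the set of relevant subforests $F_1$ of $T_1$ that the DP ever queries has size $O(|T_1|\log|T_1|)$, mirroring Klein's heavy-path argument. First I would recall the heavy-light decomposition of $T_1$: for each internal node, designate the child with the largest subtree as \emph{heavy} (breaking ties arbitrarily) and all other children as \emph{light}; this partitions the edges of $T_1$ into heavy paths, and each root-to-leaf path crosses at most $\lceil\log_2|T_1|\rceil$ light edges, since every time we descend through a light edge the size of the current subtree drops by at least a factor of two. The key structural observation is how the recursion in \Cref{eq:right-recurse,eq:left-recurse} behaves: at a state $(F_1,F_2)$ the rule always peels off a root from the \emph{lighter} side of $F_1$ (the side whose leftmost/rightmost tree is smaller), so that the "heavy" direction is preserved as long as possible. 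The consequence is that the subforests $F_1$ that arise when we start the recursion from a fixed subtree $T_v$ rooted at $v$ are exactly obtained by, first, repeatedly deleting from whichever end does not lead along the heavy path out of $v$, until the heavy child's subtree is exposed, and then recursing into that subtree; together with a bounded amount of "bookkeeping" peeling, the distinct subforests encountered before we enter a strictly smaller light subtree all contain the heavy path and are determined by how far we have peeled from each end.

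Concretely, I would define, for each node $v$ of $T_1$, the collection of subforests obtained from $T_v$ by the peeling process that keeps the heavy path through $v$ intact, and argue (as in \Cref{prop:n2}, via a pair of counters for how many roots have been removed from the left and from the right) that there are only $O(|T_v|)$ such subforests, because once the non-heavy direction is exhausted the recursion moves into a light subtree, which is handled recursively. Then the total count is $\sum_{v} O(\text{contribution of } v)$, and by the standard heavy-light charging argument each node $u$ of $T_1$ is charged $O(1)$ for each light ancestor of $u$ — equivalently, the recursion can "restart" inside at most $O(\log|T_1|)$ nested light subtrees along any path — so the sum telescopes to $O(|T_1|\log|T_1|)$. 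I would make this precise by fixing, for each relevant $F_1$, the topmost light node $w$ such that $F_1$ is a subforest of $T_w$ but not of any light subtree strictly below $w$; then $F_1$ is one of the $O(|T_w|)$ heavy-path-preserving subforests of $T_w$, and summing $|T_w|$ over light nodes $w$ weighted by multiplicity gives $O(|T_1|\log|T_1|)$ because $\sum_{w \text{ light}} |T_w| = O(|T_1|\log|T_1|)$ (each node lies in the subtrees of at most $O(\log|T_1|)$ light ancestors).

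The main obstacle I expect is verifying the structural claim that the recursion really does preserve the heavy path in the precise sense needed — i.e., that whenever $(F_1,F_2)$ is reached with $F_1$ containing the heavy path out of its $\lca$, the rules in \Cref{eq:right-recurse,eq:left-recurse} only produce children $F_1'$ that either still contain that heavy path (having peeled a root from the non-heavy end) or drop down to a subtree strictly smaller by a factor of two (the subtree-matching branch $R^\circ_{F_1}$ or $L^\circ_{F_1}$, or the remainder $L'_{F_1}$ / $R'_{F_1}$ which is again a subforest of a light subtree or of $T_{\lca}$ with one root removed). Handling the remainder branches carefully — checking that $L'_{F_1}$ and $R'_{F_1}$ are themselves subforests of $T_1$ whose $\lca$ has a subtree that is a constant factor smaller, or else are subforests of a proper light subtree — is the delicate bookkeeping, and is exactly where the condition "$\size(L_{F_1}) \le \size(R_{F_1})$" (resp. its negation) determining which rule fires is used to guarantee the factor-two decrease. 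Everything else is the routine heavy-light sum.
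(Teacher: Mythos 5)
Your proposal takes the same route the paper attributes to Klein (the paper only cites Lemma~3 of \cite{Klein98} and does not reprove it): a heavy-light decomposition combined with the observation that the recursion always peels from the side whose extremal tree is smaller. The outline is correct, and the two places you flag as delicate do go through; for completeness: (i) once the rule begins deleting from one side of $F_1$ it necessarily continues deleting from that side until the entire extremal tree is gone --- after removing $r_{F_1}$ the new rightmost tree is a child subtree of $r_{F_1}$ and hence still smaller than the unchanged leftmost tree --- so the type-2 targets $L'_{F_1}$ and $R'_{F_1}$ are just later elements of the \emph{same} canonical one-node-at-a-time deletion sequence and contribute no new subforests; (ii) the type-3 target $R^\circ_{F_1}$ (resp.\ $L^\circ_{F_1}$) is the children-forest of the root of the \emph{smaller} of the two extremal trees, which, when $F_1$ has at least two roots, has size at most $|F_1|/2$, so the recursion restarts inside a (sub)tree that can be charged to a light edge; when $F_1$ is a single tree the target coincides with the next element of the chain. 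One small correction: the $O(|T_w|)$ bound per heavy-path top $w$ does not follow from the pair-of-counters argument of \Cref{prop:n2} (that only gives $O(|T_w|^2)$); it follows because the deletion direction at each forest is forced, so the reachable counter pairs form a single monotone chain of length $|T_w|+1$. With that, your final summation $\sum_{w\ \text{light or root}} (|T_w|+1) = O(|T_1|\log|T_1|)$ is exactly the standard charging argument.
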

The proof of this lemma uses a heavy-light decomposition argument, which crucially relies on choosing the ``direction'' of recursion (\Cref{eq:right-recurse,eq:left-recurse}) based on the sizes of the leftmost and rightmost trees in $F_1$. This improves upon the previous DP algorithm by Zhang and Shasha \cite{ZhangS89}, which always recurses on the rightmost roots and could only give an $O(|T_1|^2)$ bound instead of $O(|T_1|\log |T_1|)$.  

Since there are only $O(|T_2|^2)$ possible subforests $F_2$ of $T_2$ (\Cref{prop:n2}), \Cref{lm:heavy-light} shows that we can compute $\ed(T_1, T_2)$ in $O(|T_1||T_2|^2\log |T_1|)$ time.
In the next section, we show how to use the assumption that $\ed(T_1, T_2) \le k$ to bound the number of relevant $F_2$ as well, and through this get a faster algorithm.

\section{Improved Algorithm}
\label{sec:improve}
\subsection{DP state transition graph}
\label{sec:dp-dag}
Our algorithm builds on Klein's DP algorithm described in \Cref{sec:klein}. For the sake of analysis, it is helpful to consider the DP state transition graph, which is a directed acyclic graph with vertices representing the DP states $(F_1,F_2)$ and edges representing DP transitions. 
Each edge is associated with a proxy cost that lower bounds the true incurred cost when using this transition in the actual DP. 
These will be based off the trivial lower bound
\begin{equation}
\label{eq:size-lb}
\ed(F_1, F_2) \ge \left|\size(F_1) - \size(F_2)\right|,
\end{equation}
which holds because each operation changes the size of a tree by at most 1, and at the end of applying $\ed(F_1, F_2)$ operations the trees must have the same size.

To define the DP state transition graph, we distinguish three types of DP transition that can occur from following the recursion of Klein's algorithm described in \Cref{eq:right-recurse,eq:left-recurse}. 
The first type corresponds to the first two cases of \cref{eq:right-recurse,eq:left-recurse} where we delete the rightmost or leftmost root of the forest.
The second and third types of transition capture the two subproblems generated from the third case of \cref{eq:right-recurse,eq:left-recurse} where we match nodes in the trees. Hence, the edges in the DP state transition graph and their proxy costs are defined as follows:
\begin{description}
    \item[Type 1 (Node Removal)] 
    We delete the rightmost (or leftmost) root of $F_1$ (or $F_2$).
    
    For example, we can transition
    $(F_1,F_2) \to (F_1 - r_{F_1}, F_2)$. 
    This transition has cost $1$.
   \item[Type 2 (Subtree Removal)] 
   We remove the rightmost (or leftmost) subtrees of $F_1$ and $F_2$.
   
   For example, we can transition
   $(F_1,F_2) \to (L'_{F_1}, L'_{F_2})$.
   This transition costs at least $|\size(R_{F_1}) - \size(R_{F_2})|$ by \cref{eq:size-lb} and the last case of \cref{eq:right-recurse}.
   \item[Type 3 (Subtree Selection)] 
   We focus on the subtrees below the rightmost (or leftmost) roots of $F_1$ and $F_2$.
   
   For example, we can transition
   $(F_1,F_2) \to (R^\circ_{F_1}, R^\circ_{F_2})$. 
   This transition costs at least  
   $ | \size(L'_{F_1}) - \size(L'_{F_2}) |$ by \cref{eq:size-lb} and the last case of \cref{eq:right-recurse}.
\end{description}

\subsection{Pruning DP states}
Each pair of subforests $(F_1, F_2)$ is a potential state in the DP table.
We say a state $(F_1,F_2)$ is ``not useful'' or \emph{useless} if we do not need to evaluate $\ed(F_1,F_2)$ to compute the overall tree edit distance $\ed(T_1,T_2)$.
Having defined the DP state transition graph, we use the following simple observation to label some states as useless.
\begin{proposition}[DP State Pruning Rule 1]
\label{prop:rule1}
Suppose input trees $T_1,T_2$ satisfy $\ed(T_1,T_2)\le k$.
Then if a state cannot be reached from $(T_1,T_2)$ by traversing a sequence of edges with total cost at most $k$ in the DP state transition graph, that state is useless.
\end{proposition}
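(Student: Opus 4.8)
The statement asserts a pruning rule: if $\ed(T_1,T_2)\le k$ then any DP state unreachable from $(T_1,T_2)$ within total proxy-cost $k$ can be discarded. I would prove this by establishing the contrapositive at the level of the recursion tree: every state $(F_1,F_2)$ that the top-down DP actually evaluates lies on some root-to-node path in the DP state transition graph whose total proxy cost is at most $\ed(T_1,T_2)\le k$. So the main object to track is, for each relevant state, a particular path from $(T_1,T_2)$ down to it, together with a bound on the sum of proxy costs along that path.

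**Key steps.** First I would fix an optimal edit/alignment witnessing $\ed(T_1,T_2)$ — or rather, argue purely by induction on the recursion without referencing an alignment, which I think is cleaner. The induction hypothesis is: whenever the DP evaluates a state $(F_1,F_2)$, there is a path $P$ from $(T_1,T_2)$ to $(F_1,F_2)$ in the transition graph with $\mathrm{cost}(P) + \ed(F_1,F_2) \le \ed(T_1,T_2)$. The base case is the root $(T_1,T_2)$ with the empty path of cost $0$. For the inductive step, consider a relevant state $(F_1,F_2)$ with path $P$ satisfying the hypothesis, and examine which recursive calls it makes via \Cref{eq:right-recurse} or \Cref{eq:left-recurse} (the empty-forest base cases make no calls, so nothing to check there). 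I need to show that for whichever case of the recursion is responsible for a child being evaluated, appending the corresponding transition edge to $P$ keeps the invariant. Concretely: if the child is reached via a Type 1 transition, say $(F_1,F_2)\to(F_1-r_{F_1},F_2)$, the edge cost is $1$ and by the recursion $\ed(F_1,F_2)\le \ed(F_1-r_{F_1},F_2)+1$, so $\mathrm{cost}(P)+1+\ed(F_1-r_{F_1},F_2)\ge \mathrm{cost}(P)+\ed(F_1,F_2)$ is not quite the direction I want — I actually need the reverse inequality, which holds only when that branch is the one achieving the minimum. This is the subtlety: a top-down DP with memoization evaluates \emph{all} three branches regardless of which achieves the minimum, so I cannot assume the evaluated child lies on an optimal path. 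I would resolve this by strengthening the argument: it suffices that each evaluated child satisfies $\mathrm{cost}(P')+\ed(\text{child})\le k$ where $P'=P$ plus the new edge; and for a Type 1 edge, $\mathrm{cost}(P')+\ed(F_1-r_{F_1},F_2) = \mathrm{cost}(P)+1+\ed(F_1-r_{F_1},F_2)$, and I need this $\le k$. Since the child differs from the parent by one node removal, $\ed(F_1-r_{F_1},F_2) \le \ed(F_1,F_2)+1$, giving a bound of $\mathrm{cost}(P)+\ed(F_1,F_2)+2$, which is \emph{too weak} by $2$.

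**The real argument / main obstacle.** The fix is to not bound via an arbitrary path but via the size-based proxy costs in \eqref{eq:size-lb} directly, telescoping along the path: the proxy cost of a Type 2 edge $(F_1,F_2)\to(L'_{F_1},L'_{F_2})$ is $||R_{F_1}|-|R_{F_2}||$, the proxy cost of a Type 3 edge to $(R^\circ_{F_1},R^\circ_{F_2})$ is $||L'_{F_1}|-|L'_{F_2}||$, and a Type 1 edge costs $1$. For a relevant state $(F_1,F_2)$ reached by path $P$, I claim $\mathrm{cost}(P) \le \ed(T_1,T_2) - |\,|F_1|-|F_2|\,|$; equivalently $\mathrm{cost}(P) + ||F_1|-|F_2|| \le k$. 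This is the right invariant because it downgrades $\ed(F_1,F_2)$ in the naive invariant to its lower bound $||F_1|-|F_2||$, and the proxy costs were \emph{designed} to make exactly this telescoping work — e.g. at a match step the parent forest $F_i$ splits into $L'_{F_i}$, $r_{F_i}$, $R^\circ_{F_i}$ and one checks $||F_1|-|F_2|| \le ||L'_{F_1}|-|L'_{F_2}|| + ||R^\circ_{F_1}|-|R^\circ_{F_2}||$ (since $|F_i| = |L'_{F_i}|+|R^\circ_{F_i}|+1$ and the $+1$'s cancel), so the drop in the $||F_1|-|F_2||$ term pays for the proxy cost of the new edge; at a Type 1 step, $||F_1-r_{F_1}|-|F_2||$ differs from $||F_1|-|F_2||$ by at most $1 = $ the edge cost. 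The base case is $\mathrm{cost}(\text{empty path}) + ||T_1|-|T_2|| \le k$, which is exactly \eqref{eq:size-lb} applied to the input. I expect the main obstacle to be the bookkeeping at the match transitions — specifically confirming that for whichever of the three recursive branches is traversed, the corresponding triangle-type inequality among subforest sizes holds, handling the rightmost/leftmost cases symmetrically, and being careful that a match step generates \emph{two} children ($R^\circ$ and $L'$) each of which must independently satisfy the invariant, using that the two proxy costs sum to cover the size-difference drop. Once the invariant is proven by induction, the proposition is immediate: any useless (in the sense of the recursion) — rather, any state the recursion \emph{does} evaluate is reachable within cost $\le k - ||F_1|-|F_2|| \le k$, so contrapositively any state not so reachable is never evaluated, i.e. useless.
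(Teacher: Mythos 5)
There is a genuine gap, and it starts with the framing. The paper defines a state to be \emph{useless} if it is not \emph{needed} to compute $\ed(T_1,T_2)$ --- not if it is never touched by the unpruned recursion. Your plan is to show that \emph{every state the top-down DP actually evaluates} is reachable within proxy cost $k$; but that statement is false, and the whole point of the rule is that it is false. For example, the unpruned recursion reaches $(\emptyset, T_2)$ by deleting $r_{F_1}$ repeatedly ($|T_1|$ Type~1 edges, total cost $|T_1| \gg k$, and no cheaper path exists since any Type~2 or~3 edge also removes nodes of $F_2$); this state \emph{is} evaluated by Klein's recursion but is useless, and Rule~1 exists precisely to let the algorithm skip such states. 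Correspondingly, your repaired invariant $\mathrm{cost}(P)+\bigl||F_1|-|F_2|\bigr|\le k$ does not propagate: at a Type~1 edge the path cost grows by $1$ \emph{and} the size-difference term can grow by $1$, so the left-hand side can increase by $2$ per step (exactly the ``too weak by $2$'' issue you noticed, which your fix does not remove); and at a match step the triangle inequality you invoke, $\bigl||F_1|-|F_2|\bigr|\le \bigl||L'_{F_1}|-|L'_{F_2}|\bigr|+\bigl||R^\circ_{F_1}|-|R^\circ_{F_2}|\bigr|$, points in the \emph{wrong} direction --- the two terms on the right can each be large while the left is $0$, so the parent's slack cannot pay for the new edge plus the child's size gap.

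The correct argument is essentially the first one you set up and then abandoned, applied only to an optimal derivation rather than to all evaluated states. Fix an optimal derivation of $\ed(T_1,T_2)=d\le k$ in the recursion; the true costs incurred by its transitions sum to $d$, and by construction each edge's proxy cost lower-bounds its true incurred cost (for a Type~2 edge, $\ed(R^\circ_{F_1},R^\circ_{F_2})+\delta(r_{F_1},r_{F_2})\ge\bigl|\size(R_{F_1})-\size(R_{F_2})\bigr|$ by \cref{eq:size-lb}, and symmetrically for Type~3). Hence the root-to-state path to any state appearing in this derivation has proxy cost at most $d\le k$, so all such states survive the pruning; the pruned minimum at $(T_1,T_2)$ is therefore at most $d$, and it is at least $d$ because it minimizes over a subset of valid derivations. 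Any state not reachable within cost $k$ lies on no optimal derivation and may be discarded. The paper treats this as a ``simple observation'' and gives no written proof, but this lower-bounding-along-an-optimal-derivation argument is the intended one; your proposal as written proves neither this nor a true substitute for it.
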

We will also make use of the following pruning rule, which is a direct application of \cref{eq:size-lb}.
\begin{proposition}[DP State Pruning Rule 2]
\label{prop:rule2}
Suppose input trees $T_1,T_2$ satisfy $\ed(T_1,T_2)\le k$.
If $|\size(F_1)-\size(F_2)| > k$, then the DP state $(F_1,F_2)$ is useless.
\end{proposition}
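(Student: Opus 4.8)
The plan is to first reduce to the bound $\ed(F_1,F_2) > k$ and then show that a top-down evaluation of $\ed(T_1,T_2)$ never needs the exact value of any subproblem whose edit distance exceeds $k$. By \cref{eq:size-lb}, if $|\size(F_1)-\size(F_2)| > k$ then $\ed(F_1,F_2)\ge|\size(F_1)-\size(F_2)| > k \ge \ed(T_1,T_2)$, so it suffices to prove that every state $(F_1,F_2)$ with $\ed(F_1,F_2) > k$ is useless.

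To do this I would consider the memoized recursion of \cref{eq:trivial,eq:right-recurse,eq:left-recurse}, modified so that whenever it is called on a state $(F_1,F_2)$ with $|\size(F_1)-\size(F_2)| > k$ it returns $+\infty$ without recursing, and let $\mathsf{alg}(F_1,F_2)$ denote the value it outputs (with the convention $x+\infty=\infty$). Two statements, proved together by induction on $\size(F_1)+\size(F_2)$, suffice: (i) $\mathsf{alg}(F_1,F_2)\ge\ed(F_1,F_2)$ for every state; and (ii) $\mathsf{alg}(F_1,F_2)=\ed(F_1,F_2)$ whenever $\ed(F_1,F_2)\le k$. Claim (i) holds because substituting the inductively over-estimating child values into any candidate expression of \cref{eq:right-recurse,eq:left-recurse} only increases it, while a pruned state returns $+\infty\ge\ed$; the base cases \cref{eq:trivial} are immediate. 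For (ii), note that $\ed(F_1,F_2)\le k$ together with \cref{eq:size-lb} gives $|\size(F_1)-\size(F_2)|\le k$, so $(F_1,F_2)$ is not pruned; now pick a candidate expression in \cref{eq:right-recurse} or \cref{eq:left-recurse} whose value equals $\ed(F_1,F_2)\le k$. Every additive term in such an expression (the $+1$, the term $\delta(\cdot,\cdot)\in\{0,1\}$, and any second $\ed(\cdot,\cdot)\ge 0$ summand) is non-negative, so each $\ed$-subproblem occurring in it has edit distance at most $k$; by the induction hypothesis $\mathsf{alg}$ evaluates those subproblems exactly, hence it evaluates this candidate expression exactly, and combining with (i) we get $\mathsf{alg}(F_1,F_2)=\ed(F_1,F_2)$. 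Applying (ii) at the root yields $\mathsf{alg}(T_1,T_2)=\ed(T_1,T_2)$, so the pruned recursion is correct while never needing to evaluate $\ed$ at any state whose size difference exceeds $k$; such states are therefore useless, as claimed.

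I do not expect any genuine obstacle here; this is the tree analogue of the standard bounded-distance pruning used for string edit distance \cite{Ukkonen85}. The only step needing a little care is the inductive step of (ii): one must argue that an optimal candidate expression of value at most $k$ decomposes into subproblems each of edit distance at most $k$, which is exactly where the non-negativity of all the additive terms in \cref{eq:right-recurse,eq:left-recurse} is used.
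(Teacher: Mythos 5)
Your proof is correct and follows the same reasoning the paper intends: the paper treats this proposition as a direct consequence of \cref{eq:size-lb} (any state with size difference exceeding $k$ has edit distance exceeding $k\ge\ed(T_1,T_2)$ and so never needs exact evaluation), and your memoized-recursion induction simply makes that routine step fully formal. No gaps.
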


The two pruning rules will enable us to prove the following core result, which shows that when the tree edit distance is bounded, each relevant subforest cannot occur in too many useful states.
\begin{lemma}[Number of useful DP states]
\label{lm:bound-subforests}
Suppose input trees $T_1,T_2$ satisfy $\ed(T_1,T_2)\le k$.
For each relevant subforest $F_1$ of $T_1$, there are at most $O(k^2)$ subforests $F_2$ of $T_2$ such that $(F_1,F_2)$ is a useful DP state.
\end{lemma}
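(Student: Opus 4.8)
The plan is to fix a relevant subforest $F_1$ of $T_1$ and bound the number of subforests $F_2$ of $T_2$ for which $(F_1, F_2)$ survives both pruning rules. The two constraints to exploit are: by \Cref{prop:rule2}, any useful $F_2$ satisfies $|\size(F_2) - \size(F_1)| \le k$, so the \emph{size} of $F_2$ is confined to an interval of $O(k)$ values; and by \Cref{prop:rule1}, $(F_1, F_2)$ must be reachable from $(T_1, T_2)$ along a path of total proxy cost at most $k$. The first constraint alone is not enough, since for a fixed size there could be $\Omega(|T_2|)$ subforests of $T_2$ (a subforest is determined by, say, its leftmost and rightmost roots, so with size fixed there is still one degree of freedom). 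The extra savings must come from the reachability constraint, and in particular from the fact that on the way to $(F_1, F_2)$ the DP also fixes enough structure of $T_2$ that, after accounting for the cost budget, only $O(k)$ subforests $F_2$ of each admissible size remain — giving $O(k)\cdot O(k) = O(k^2)$ in total.

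Concretely, I would look at the last transition (in the DP recursion, the outermost step) along a cost-$\le k$ path that produces $(F_1, F_2)$, and trace which of the three transition types it is. For a Type 1 (node removal) step we arrive at $(F_1, F_2)$ from $(F_1', F_2)$ or $(F_1, F_2')$ where one of the two forests gained a single root; for Type 2 / Type 3 steps (the ``match'' transitions) the new $F_2$ is obtained from a larger forest by either deleting its rightmost/leftmost subtree or restricting to the subtree under its rightmost/leftmost root — and crucially, the proxy cost of that transition is $|\size(R_{F_1})-\size(R_{F_2})|$ (resp. $|\size(L'_{F_1})-\size(L'_{F_2})|$), which the budget forces to be small. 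So the idea is: the subforest $F_2$ is reached by a sequence of at most $k$ "cheap" boundary operations on $T_2$, where "cheap" means each Type 2/3 step only moves the relevant boundary of $F_2$ by an amount tied to its cost, while the Type 1 steps cost a full unit each. Counting how many distinct $F_2$ can be produced this way — tracking the position of the left boundary and right boundary of $F_2$ inside the preorder traversal of $T_2$, and noting that the total displacement of each boundary is controlled by the cost budget $k$ plus the $O(k)$-slack from the size constraint — should yield the $O(k^2)$ bound: roughly $O(k)$ choices for where the left boundary of $F_2$ sits and $O(k)$ choices for where the right boundary sits, relative to positions determined by $F_1$ and the path.

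The main obstacle I anticipate is exactly the point flagged in the high-level overview: unlike Zhang–Shasha's algorithm, Klein's recursion does not traverse nodes in a single consistent preorder, so "distance between DP states" is not simply measured along the preorder as in Touzet's argument. The directions of recursion (\Cref{eq:right-recurse} vs.\ \Cref{eq:left-recurse}) are chosen adaptively by comparing $\size(L_{F_1})$ and $\size(R_{F_1})$, so the sequence of boundary moves on $F_2$ is dictated by the shape of $F_1$, not by $F_2$. Making the counting rigorous will require a careful case analysis of how $F_2$ could have been generated — in the spirit of the remark in the overview about "how various subforests are generated by different transition rules" — and a combinatorial argument bounding the structure of the set of deleted nodes given that the accumulated cost (which lower-bounds the true edit distance of intermediate states) stays within $k$. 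I expect the cleanest route is to first establish a structural lemma pinning down, for each of the finitely many ways $(F_1, F_2)$ can arise as an immediate successor in the transition graph, an invariant relating the boundaries of $F_1$ and $F_2$ (something like: the preorder positions of $\ell_{F_2}, r_{F_2}$ in $T_2$ lie within $O(k)$ of quantities determined by $\ell_{F_1}, r_{F_1}$ and the path cost), and then to conclude by simply counting integer pairs in an $O(k)\times O(k)$ grid.
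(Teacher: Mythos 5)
Your plan has the right shape, and the endgame you envision is sound: a subforest of $T_2$ is uniquely determined by how many nodes are stripped from its left and from its right (this is essentially \Cref{prop:n2}, taking $a=\pre(u)-1$ and $b=|T_2|-a-\size(F_2)$), so confining each of these two counts to an interval of length $O(k)$ that depends only on $F_1$ would indeed give the $O(k^2)$ bound. The genuine gap is that the ``structural lemma'' carrying all of this is only asserted, and the mechanism you propose for proving it --- that each boundary of $F_2$ moves by an amount charged to the transition cost --- is false as stated. The culprit is the nodes removed \emph{from the middle}: every type~3 transition deletes the matched root itself, and type~1 transitions can delete a lone root, so along a cost-$\le k$ path an unbounded number of ancestors of the surviving forest (the nodes on the spine from the root of $T_2$ down to $\lca(F_2)$) can be stripped with no proportional charge in the proxy cost. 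All of these nodes shift $\pre(\ell_{F_2})$, so the left boundary's displacement is \emph{not} controlled by the cost budget. This is exactly why the paper first partitions $T_2\setminus F_2$ into left, right, and middle upper parts $LU_{F_2},RU_{F_2},MU_{F_2}$, and proves (\Cref{lm:inequality LU RU}) that only the \emph{left-removed} and \emph{right-removed} counts of $F_1$ and $F_2$ are forced within $k$ of each other, by classifying each transition's removals and applying the triangle inequality to the block-size differences $|c_1^{(t)}-c_2^{(t)}|$. The middle count is then recovered indirectly: since $\size(MU_{F})=|T|-\size(F)-\size(LU_{F})-\size(RU_{F})$, \Cref{prop:rule2} pins it down to an $O(k)$ window as well. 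Your sketch never separates out the middle-removed nodes, so the ``careful case analysis'' you defer is precisely the missing proof. A secondary issue: your reference positions are ``determined by $F_1$ \emph{and the path}''; if they genuinely depend on the path, the union over paths need not be $O(k^2)$, so you must (and, via the decomposition above, can) make them depend on $F_1$ alone.

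For comparison, once the left-removed count, right-removed count, and size of $F_2$ are each confined to $O(k)$ windows, the paper does \emph{not} finish with your grid argument; its \Cref{lm:abc} instead runs a generation procedure and handles a ``multiple solutions'' case (chains of sibling-free ancestors) with a separate combinatorial argument about $\lca(F)$. Your injective-map-to-a-grid finish, if you carry out the structural lemma correctly, is a legitimate and arguably cleaner alternative for that final counting step --- but as submitted, the proposal proves neither half.
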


\Cref{lm:bound-subforests} together with \Cref{lm:heavy-light} immediately shows an $O(nk^2 \log n)$ bound on the number of useful DP states, which will suffice to prove \Cref{thm:bounded-ted},
so in the remainder of this section, we setup the proof of this lemma.

\begin{definition}[Upper parts]
Given a subforest $F$ of $T$, we partition the nodes of $T\setminus F$ into three disjoint upper parts $MU_F, LU_F$, and $RU_F$ as follows. 
\begin{itemize}
    \item  The middle upper part $MU_F$ contains the nodes on the path from the root of $T$ to $\lca(F)$ (excluding $\lca(F)$ if $\lca(F)\in F$).
    \item  The left upper part  is defined as $LU_{F}:=\{u \in T\setminus MU_{F} \mid \pre(u)<\pre(v) \text{ for all $v\in F$}\}$, where $\pre(u)$ denote the index of $u$ in the preorder traversal of $T$ ($1\le \pre(u)\le |T|$). The right upper part $RU_{F}$ is defined symmetrically using the postorder traversal of $T$. Intuitively, $LU_{F}$ consists of the nodes to the left of the path $MU_{F}$, and  $RU_F$ consists of the nodes to the right of this path.
\end{itemize}
\end{definition}
See \cref{fig:upper-parts} for some examples.
\begin{figure}[t]
\centering
\def\svgwidth{\linewidth}
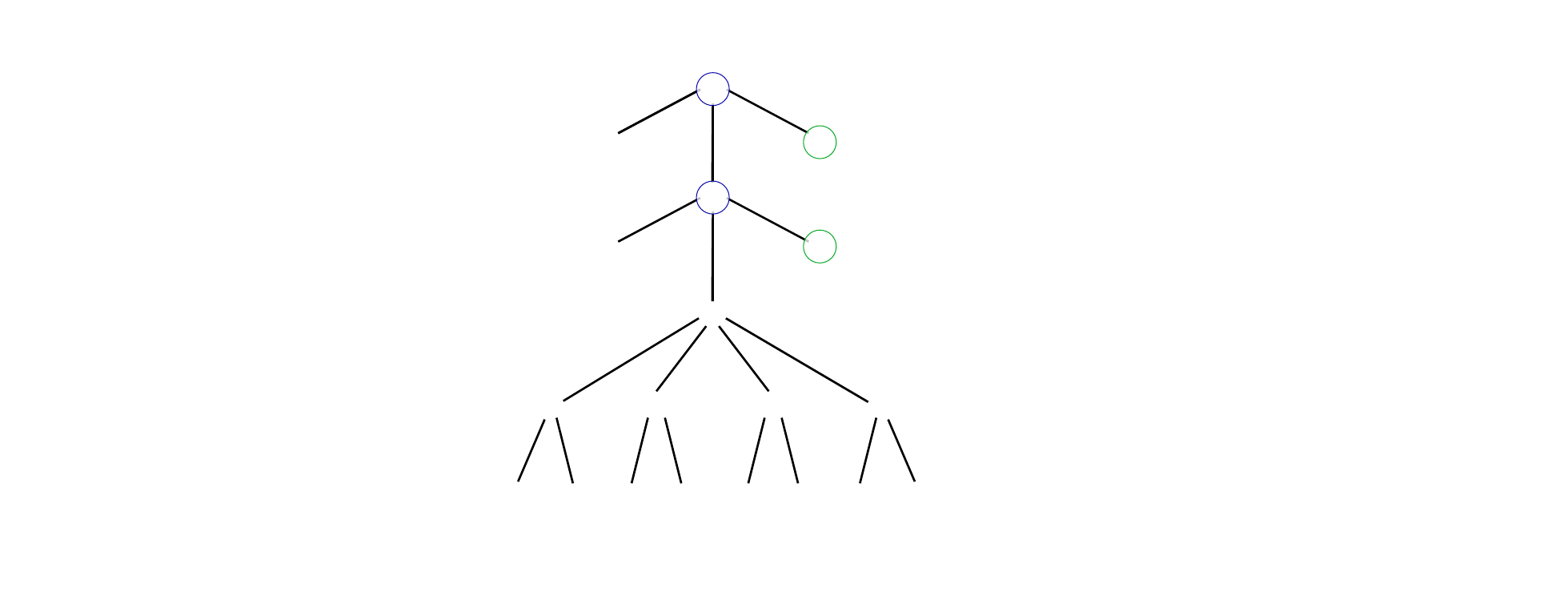
\caption{Three examples of subforests $F$ in different underlying trees, with upper parts labeled.}
\label{fig:upper-parts}
\end{figure}


If a DP state $(G_1, G_2)$ can be reached from $(T_1, T_2)$ in the DP state transition graph, it means that we obtain $G_1$ and $G_2$ by removing some nodes in $T_1$ and $T_2$ respectively, following the DP transition rules.
We classify the removed nodes in $T_1\setminus G_1$ according to which of the three upper parts they belong to.
For node $v \in T_1\setminus G_1$, if $v\in LU_{G_1}$ (or $v\in RU_{G_1}$, $v\in MU_{G_1}$), then we say $v$ is \emph{left-removed} (or \emph{right-removed}, \emph{middle-removed}) with respect to subforest $G_1$. If during a DP transition $(F_1,F_2)\to (G_1,G_2)$, a node $v\in F_1\setminus G_1$ is left-removed (or right-removed, middle-removed) with respect to not only $G_1$, but also all subforests $G_1'\subseteq G_1$ (which may be reached by later DP transitions), then we simply say $v$ is left-removed (or right-removed, middle-removed) during this DP transition, without specifying the subforest $G_1$.
The above discussion also similarly applies to the second input tree $T_2$ and its subforests.

By inspecting the DP transition rules described in \Cref{sec:dp-dag}, we immediately have the following simple but useful observation.
\begin{lemma}
\label{lm:transition-contract}
Let $(F_1, F_2)$ be a DP state. The following hold:
\begin{itemize}
    \item 
A type 2 transition from this state either right-removes $\size(R_{F_1})$ nodes, or left-removes $\size(L_{F_1})$ nodes from $F_1$, depending on whether the right or left subtree were removed.
    \item 
 A type 3 transition from this state either left-removes $\size(F_1-R_{F_1})$ nodes and middle-removes one node, or right-removes $\size(F_1 - L_{F_1})$ nodes and middle-removes one node from $F_1$, depending on whether the transition zoomed in on the right or left subtree.
\end{itemize}
Similar statements hold for removals in $F_2$.
\end{lemma}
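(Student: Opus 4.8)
The plan is to verify the statement by directly unfolding the transition rules of \Cref{sec:dp-dag} against the definitions of the upper parts, using a single structural fact about subforests. Namely: any subforest $F$ of a tree $T$ is a left-to-right sequence of pairwise node-disjoint subtrees of $T$, say $F = T[w_1]\cup\dots\cup T[w_m]$ with $\pre(w_1)<\dots<\pre(w_m)$ (writing $T[w]$ for the subtree of $T$ rooted at $w$); this follows by a short induction on the removal operations defining subforests, since removing the leftmost (resp.\ rightmost) root just replaces $T[w_1]$ (resp.\ $T[w_m]$) by the subtrees of its children. Moreover, of two disjoint subtrees of $T$, the one occurring earlier in the preorder occurs entirely earlier in the postorder as well. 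With this notation $R_F=T[w_m]$, $L_F=T[w_1]$, $L'_F=T[w_1]\cup\dots\cup T[w_{m-1}]$, $r_F=w_m$, and each transition removes exactly the node set named in the lemma; what remains is to check that the removed nodes land in the claimed upper part with respect to \emph{every} subforest $G_1'$ reachable by later transitions, which is what being ``left/right/middle-removed during this transition'' requires.

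For a Type~2 transition discarding the rightmost subtrees, $(F_1,F_2)\to(L'_{F_1},L'_{F_2})$, the removed nodes are exactly the $\size(R_{F_1})$ nodes of $R_{F_1}=T_1[w_m]$. Any later-reachable $G_1'\subseteq L'_{F_1}$ has all its nodes inside $T_1[w_1]\cup\dots\cup T_1[w_{m-1}]$, which lies entirely before $T_1[w_m]$ in both preorder and postorder; so each $v\in R_{F_1}$ has strictly larger postorder than all of $G_1'$, and since $\lca(G_1')$ is an ancestor of some node with preorder $<\pre(w_m)\le\pre(v)$, the node $v$ is not an ancestor of $\lca(G_1')$ and hence $v\notin MU_{G_1'}$. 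Thus $v\in RU_{G_1'}$: $v$ is right-removed. The mirror transition $(F_1,F_2)\to(R'_{F_1},R'_{F_2})$ is symmetric and left-removes the $\size(L_{F_1})$ nodes of $L_{F_1}$.

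For a Type~3 transition zooming into the right subtree, $(F_1,F_2)\to(R^\circ_{F_1},R^\circ_{F_2})$, the removed nodes are the $\size(F_1-R_{F_1})=\size(L'_{F_1})$ nodes of $L'_{F_1}$ together with the single node $r_{F_1}=w_m$. Every later-reachable $G_1'\subseteq R^\circ_{F_1}$ consists of strict descendants of $w_m$, so $\lca(G_1')$ is a descendant (possibly equal) of $w_m$ while $w_m\notin G_1'$; in all cases $w_m$ lies on the root-to-$\lca(G_1')$ path and is not excluded from $MU_{G_1'}$, so $r_{F_1}$ is middle-removed. For $v\in L'_{F_1}$, say $v\in T_1[w_i]$ with $i<m$: since $T_1[w_i]$ is disjoint from $T_1[w_m]$, which contains $G_1'$ and $\lca(G_1')$, the node $v$ precedes all of $G_1'$ in preorder and cannot be an ancestor of $\lca(G_1')$ (a node of $T_1[w_i]$ is never an ancestor of a node of the disjoint subtree $T_1[w_m]$), hence $v\notin MU_{G_1'}$ and $v\in LU_{G_1'}$: $v$ is left-removed. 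The mirror transition $(F_1,F_2)\to(L^\circ_{F_1},L^\circ_{F_2})$ is symmetric, and Type~1 transitions are not addressed by the statement.

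The only delicate point, and the main (if mild) obstacle, is that these classifications must hold uniformly over \emph{all} subforests $G_1'$ reachable after the transition, not merely its immediate successor. This is handled because, whatever further transitions are applied, each reachable $G_1'$ stays inside the fixed subtree(s) of $T_1$ selected by the current transition; the disjointness of $T_1[w_1],\dots,T_1[w_m]$ then makes all the preorder/postorder comparisons and the ``$v$ is not an ancestor of $\lca(G_1')$'' claims insensitive to the exact choice of $G_1'$. The same arguments apply verbatim to removals in $F_2$.
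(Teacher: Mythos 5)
Your proof is correct and follows the same route the paper intends: the paper states this lemma as an immediate observation from inspecting the transition rules, and your argument simply spells out the verification, using the (correct) structural fact that a subforest is an ordered union of disjoint full subtrees $T[w_1],\dots,T[w_m]$ and checking the preorder/postorder and ancestry conditions against the definitions of $LU$, $RU$, $MU$ uniformly over all later-reachable $G_1'$. No gaps.
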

Note that in the case of type 1 transitions, we cannot tell whether the node being removed was a left, middle, or right-removal. 
However, we observe that a type 1 transition always has cost 1. 
Combining this observation with \Cref{lm:transition-contract} and the pruning rule in \Cref{prop:rule1}, we obtain the following property of useful DP states $(G_1,G_2)$:
\begin{lemma}
\label{lm:inequality LU RU}
If DP state $(G_1,G_2)$ survives the pruning rule in \Cref{prop:rule1}, then 
 \[|\size(LU_{G_1})-\size(LU_{G_2})| \le k,\] and \[|\size(RU_{G_1})-\size(RU_{G_2})| \le k.\]
\end{lemma}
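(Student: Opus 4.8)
The plan is to track, along any path in the DP state transition graph from $(T_1,T_2)$ to $(G_1,G_2)$, the evolution of the quantity $\size(LU_{F_1})-\size(LU_{F_2})$ and show that each unit of change in this quantity is ``paid for'' by a unit of cost on the corresponding edge. Since a path of total cost at most $k$ reaches every useful state by \Cref{prop:rule1}, and since at the start $\size(LU_{T_1})=\size(LU_{T_2})=0$, this yields $|\size(LU_{G_1})-\size(LU_{G_2})|\le k$; the bound for $RU$ follows by a symmetric argument (using the postorder traversal in place of the preorder traversal).

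First I would analyze how a single DP transition $(F_1,F_2)\to(F_1',F_2')$ affects $\size(LU_{F_1})$, using \Cref{lm:transition-contract}. For a type~2 transition that removes the \emph{left} subtrees, the $\size(L_{F_1})$ removed nodes of $F_1$ are left-removed, and likewise $\size(L_{F_2})$ nodes of $F_2$ are left-removed; so $\size(LU_{F_1})$ increases by $\size(L_{F_1})$ and $\size(LU_{F_2})$ by $\size(L_{F_2})$, changing their difference by exactly $\size(L_{F_1})-\size(L_{F_2})$ in absolute value — which is precisely (a lower bound on, hence at most) the proxy cost $|\size(L_{F_1})-\size(L_{F_2})|$ of this edge. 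For a type~2 transition removing the \emph{right} subtrees, no node is left-removed, so $\size(LU_{F_1})$ and $\size(LU_{F_2})$ are both unchanged and the difference is preserved (cost~$\ge 0$ suffices). For a type~3 transition zooming in on the \emph{right} subtree, the $\size(F_1-L_{F_1})$ nodes of $F_1-R_{F_1}$ outside $L_{F_1}$ are right-removed and the root $\ell_{F_1}$ is middle-removed, so again nothing is left-removed in either tree and the difference is preserved; for a type~3 transition zooming in on the \emph{left} subtree, the nodes removed are right- or middle-removed, so once more the difference in $LU$-sizes is unchanged. Finally, a type~1 transition removes a single node from one of the forests; regardless of whether that node ends up being left-, middle-, or right-removed, $|\size(LU_{F_1})-\size(LU_{F_2})|$ changes by at most $1$, which is exactly the cost of a type~1 edge.

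Putting this together: along a path $(T_1,T_2)=(H_1^{(0)},H_2^{(0)})\to\cdots\to(H_1^{(m)},H_2^{(m)})=(G_1,G_2)$ of total cost at most $k$, the quantity $\size(LU_{H_1^{(i)}})-\size(LU_{H_2^{(i)}})$ changes in absolute value by at most the cost of the $i$-th edge at each step, so by the triangle inequality its absolute value at the end is at most the total path cost, which is at most $k$. Since \Cref{prop:rule1} guarantees such a path exists for any state surviving that rule, we conclude $|\size(LU_{G_1})-\size(LU_{G_2})|\le k$, and the postorder-symmetric argument gives $|\size(RU_{G_1})-\size(RU_{G_2})|\le k$.

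The main obstacle I anticipate is the bookkeeping around the ``without specifying the subforest'' notion of left-/right-/middle-removal: I need to be careful that when a node is, say, left-removed with respect to the immediate target $F_1'$, it remains left-removed with respect to every later subforest $G_1'\subseteq F_1'$ — otherwise the membership of previously-removed nodes in $LU$, $MU$, $RU$ could shift as the forest shrinks, breaking the monotone accounting. This is exactly the content that \Cref{lm:transition-contract} is phrased to give (the removals it describes are stable under further contraction), so the argument should go through, but the write-up must invoke that stability explicitly rather than treat $\size(LU_{F_1})$ as a naively additive counter.
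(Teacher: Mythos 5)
Your overall strategy (charge the change in $\size(LU_{F_1})-\size(LU_{F_2})$ to edge costs and telescope along a path of total cost at most $k$) is in the right spirit, but the per-step invariant you rely on is false, and this is a genuine gap rather than the bookkeeping detail you flag at the end. The sets $LU_F$, $MU_F$, $RU_F$ are defined relative to the \emph{current} subforest, and $MU_F$ grows as $\lca(F)$ descends; consequently a single transition can move many \emph{previously removed} nodes from $LU$ into $MU$. Concretely, let $T$ have root $\rho$ with left child $u_1$, let $u_1$ have a single child $u_2$ with two leaf children $v_1,v_2$, and let $\rho$ have a right leaf child $z$. The subforest $F=\{v_1,v_2,z\}$ has $\lca(F)=\rho$ and $LU_F=\{u_1,u_2\}$; after the type~1 transition removing $z$ we reach $F'=\{v_1,v_2\}$ with $\lca(F')=u_2$, $MU_{F'}=\{\rho,u_1,u_2\}$, and $LU_{F'}=\emptyset$. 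So a cost-$1$ edge changed $\size(LU_{F_1})$ by $2$ (and by $m$ if the chain $u_1,\dots,u_m$ is longer); a type~2 removal of a rightmost subtree can cause the same collapse at proxy cost $0$. Hence ``the difference changes in absolute value by at most the edge cost at each step'' is simply not true, and \Cref{lm:transition-contract} cannot rescue it: that lemma gives persistence only for the nodes removed \emph{by} a type~2/3 transition, and says nothing about type~1 removals or about the reclassification of nodes removed earlier.

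The paper's proof avoids this by doing the accounting once, at the final state: it partitions $LU_{G_1}$ (taken with respect to $G_1$ itself) according to which transition removed each of its nodes, pays for the type-1-removed ones with their unit costs, and uses the persistence in \Cref{lm:transition-contract} to identify the remaining ones with $\sum_t c_1^{(t)}$ over type~2/3 transitions; two applications of the triangle inequality then finish. Separately, your type~3 case analysis is garbled: zooming in on the \emph{right} subtree, i.e.\ transitioning to $(R^\circ_{F_1},R^\circ_{F_2})$, left-removes the $\size(L'_{F_1})$ nodes of $L'_{F_1}$ and middle-removes $r_{F_1}$ --- it is the \emph{left}-zooming transition that left-removes nothing --- so your claim that no type~3 transition contributes to $LU$ is false (though that case happens to be covered by the proxy cost $|\size(L'_{F_1})-\size(L'_{F_2})|$, so it is a fixable slip, unlike the telescoping issue above).
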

\begin{proof}
Consider the sets $LU_{G_1}$ and  $LU_{G_2}$ of left-removed nodes in $G_1$ and $G_2$.
Suppose $k_1$ nodes of $LU_{G_1}$ and $k_2$ nodes of $LU_{G_2}$ were removed by type 1 transitions,
incurring a total cost of $k_1+k_2$. 
The remaining $\size(LU_{G_1}) - k_1$ nodes in $LU_{G_1}$ and $\size(LU_{G_2}) - k_2$ nodes in $LU_{G_2}$ must be the result of type 2 and 3 transitions.

From \cref{lm:transition-contract} and the discussion in \Cref{sec:dp-dag}, we know that when a type 2 or 3 transition $t$ left-removes $c_1^{(t)}$ nodes from $T_1$ and $c_2^{(t)}$ nodes from $T_2$, the incurred cost is at least $|c_1^{(t)}-c_2^{(t)}|$. 
Then by triangle inequality, the total cost from all type 2 and 3 transitions is at least
    \[\sum_{t} \left|c_1^{(t)} - c_2^{(t)}\right| \ge \left|\sum_{t} c_1^{(t)} - \sum_{t}c_2^{(t)}\right| = 
    \left|\grp{\size(LU_{G_1}) - k_1} - \grp{\size(LU_{G_2}) - k_2}\right|,\]
where the sum is over all type 2 and 3 transitions $t$ leading from state $(T_1,T_2)$ to state $(G_1,G_2)$.
Then, by applying triangle inequality once more, the total cost from all transitions is at least
\[k_1 + k_2 + \left|\grp{\size(LU_{G_1}) - k_1} - \grp{\size(LU_{G_2}) - k_2}\right| \ge \left|\size(LU_{G_1}) - \size(LU_{G_2})\right|.\]

This proves the first inequality.
The second inequality follows from identical reasoning, applied to the right-removed instead of the left-removed nodes of $G_1$ and $G_2$.
\end{proof}

We have just derived the useful \Cref{lm:inequality LU RU} from the first pruning rule in \Cref{prop:rule1}. To prove \Cref{lm:bound-subforests},  we still need to apply the second pruning rule in \Cref{prop:rule2} as well. 
We will use the following lemma.
\begin{lemma}
\label{lm:abc}
Given three integers $a,b,c$, the number of subforests $F$ of a tree $T$ which simultaneously satisfy $|\size(LU_{F})-a|\le k$, $|\size(RU_{F})-b|\le k$, and $|\size(F)-c|\le k$ is at most $O(k^2)$.
\end{lemma}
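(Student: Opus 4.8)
The plan is to parametrize subforests of $T$ in a way that is compatible with the three quantities $\size(LU_F)$, $\size(RU_F)$, and $\size(F)$, and then observe that each of the three constraints pins down one degree of freedom up to $O(k)$ choices, so that only $O(k^2)$ choices survive overall. Recall from the proof of \Cref{prop:n2} that every subforest $F$ of $T$ can be obtained by first removing the leftmost root $a$ times and then removing the rightmost root $b$ times, for a unique pair of nonnegative integers $(a,b)$ with $a+b\le |T|$; concretely, $a = \pre(u)-1$ where $u$ is the node of $F$ with smallest preorder index, and then $b$ is determined by $a$ and $\size(F)$ via $\size(F) = |T|-a-b$. Thus subforests of $T$ are in bijection with a subset of such pairs $(a,b)$, and I will bound the number of valid pairs.

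First I would relate the three target quantities to $(a,b)$. Removing the leftmost root $a$ times deletes exactly the $a$ nodes with preorder indices $1,\dots,a$ in $T$; these become precisely the nodes that end up ``to the left'' — in fact for the resulting subforest $F$, the set $LU_F$ together with $MU_F$ is related to this prefix, and more usefully, $\size(LU_F) + (\text{part of } MU_F)$ is a monotone nondecreasing function of $a$ (when $b$ is held fixed, or even unconditionally, since deleting more from the left only moves nodes into the left/middle upper parts). Symmetrically $\size(RU_F)$ is governed by $b$. The cleanest route: argue that $\size(LU_F)$ is a nondecreasing function of $a$ for fixed $b$ (each additional leftmost-root removal adds that entire leftmost tree into $T\setminus F$, and those nodes are either left-removed or become part of $MU_F$ — I would check that they contribute nondecreasingly to $\size(LU_F)$, being careful about the at most one root-path node that joins $MU_F$ at each step, which changes $\size(LU_F)$ by a controlled amount). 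Then the constraint $|\size(LU_F)-a|\le k$ restricts $a$ to an interval of $O(k)$ values once $b$ is fixed; symmetrically $|\size(RU_F)-b|\le k$ restricts $b$ to an interval of $O(k)$ values once $a$ is fixed.

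Combining these, I would argue as follows: the constraint $|\size(F)-c|\le k$ forces $a+b$ into an interval of length $2k$, i.e. $a+b \in [|T|-c-k,\, |T|-c+k]$, so there are $O(k)$ possible values of the sum $s := a+b$. For each fixed sum $s$, as $a$ ranges over $\{0,1,\dots,s\}$ (with $b = s-a$), the quantity $\size(LU_F) - a$ is the difference of a nondecreasing function of $a$ and the function $a$ itself — this need not be monotone, so I cannot immediately conclude an interval. The fix is to use \emph{both} the $LU$ and $RU$ constraints simultaneously along this line: $\size(LU_F)$ is nondecreasing in $a$ while $\size(RU_F)$ is nondecreasing in $b = s-a$, hence nonincreasing in $a$; so $\size(LU_F) - \size(RU_F)$ is nondecreasing in $a$. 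From $|\size(LU_F)-a|\le k$ and $|\size(RU_F)-b|\le k$ we get $\size(LU_F)-\size(RU_F) \in [(a-b) - 2k,\, (a-b)+2k] = [2a - s - 2k,\, 2a-s+2k]$, and since the left-hand side is a nondecreasing function of $a$ while $2a-s$ increases at rate $2$, a standard crossing argument shows that the set of valid $a$ for this fixed $s$ forms an interval of length $O(k)$. Therefore the total count is $O(k)$ choices of $s$ times $O(k)$ choices of $a$ per $s$, giving $O(k^2)$.

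The main obstacle I expect is making the monotonicity claims precise — specifically, verifying that $\size(LU_F)$ is genuinely nondecreasing in $a$ (for fixed $b$, or along fixed-$s$ lines), since each leftmost-root deletion moves one node onto the path to $\lca(F)$, i.e. into $MU_F$ rather than $LU_F$, so the naive count ``$+|\text{leftmost tree}|$ goes into $LU_F$'' is off by the size of that root-path contribution. One has to check this correction term is itself monotone (the path to $\lca(F)$ only shrinks from the bottom as we delete more, so $\size(MU_F)$ changes monotonically too) and that the net effect on $\size(LU_F)$ is still nondecreasing. Once these monotonicity statements are nailed down, the interval-counting arguments above are routine.
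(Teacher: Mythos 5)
Your parametrization of subforests by the canonical pair $(a',b')$ of \Cref{prop:n2} (number of leftmost-root removals, number of rightmost-root removals) is a sound starting point, and the step ``$|\size(F)-c|\le k$ pins $s=a'+b'$ down to $O(k)$ values'' is correct. The gap is in the second step, the crossing argument. There you write $\size(LU_F)-\size(RU_F)\in[(a-b)-2k,\,(a-b)+2k]=[2a-s-2k,\,2a-s+2k]$, which conflates the lemma's given target integers $a,b$ with the removal counts $a',b'$. The hypotheses of the lemma only confine $\size(LU_F)-\size(RU_F)$ to the \emph{fixed} window $[a-b-2k,\,a-b+2k]$, whose center does not move with $a'$; and a nondecreasing integer function can sit inside a fixed window of width $4k$ for arbitrarily many consecutive arguments, so monotonicity in $a'$ alone yields no bound on the number of admissible $a'$ per line. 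To obtain your moving window you would need the additional premise that the removal count $a'$ is within $O(k)$ of $\size(LU_F)$, which is false in general: every removed node whose preorder index is below $\min_{v\in F}\pre(v)$ lies in $LU_F\cup MU_F$, so in fact $\size(LU_F)+\size(MU_F)=a'$ exactly, and $\size(MU_F)$ can be as large as the depth of $T$. The missing ingredient is thus precisely control of $\size(MU_F)$, which is where the actual difficulty of the lemma sits (the paper handles it via a reconstruction procedure and a case analysis on the possible positions of $\lca(F)$); the monotonicity facts you flag as ``routine to nail down'' do not substitute for it.

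That said, your parametrization can be completed, and more simply than you propose. Combining the identity $\size(LU_F)+\size(MU_F)=a'$ with $\size(LU_F)+\size(MU_F)+\size(RU_F)=|T|-\size(F)=a'+b'$ gives $\size(RU_F)=b'$ \emph{exactly} for the canonical pair. Hence the hypothesis $|\size(RU_F)-b|\le k$ alone restricts $b'$ to at most $2k+1$ values, the hypothesis $|\size(F)-c|\le k$ restricts $a'+b'$ to at most $2k+1$ values, and each canonical pair determines at most one subforest; this yields the $O(k^2)$ bound with no monotonicity or crossing argument, and without using the $LU$ constraint at all. This would be a genuinely different (and shorter) route than the paper's proof, but it is not the argument you wrote.
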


Before proving \Cref{lm:abc}, we show that it implies the desired upper bound on the number of useful DP states that survive both pruning rules in \Cref{prop:rule1} and \Cref{prop:rule2}.
\begin{proof}[Proof of \Cref{lm:bound-subforests} given {\Cref{lm:abc}}]

We are given a relevant subforest $F_1$ of $T_1$, and want to bound the number of subforests $F_2$ of $T_2$ such that $(F_1, F_2)$ is a useful state.
By \Cref{lm:inequality LU RU}, the state $(F_1, F_2)$ is useful only if
\[\left|\size(LU_{F_2}) - a\right|, \left|\size(RU_{F_2}) - b\right| \le k\]
for $a = \size(LU_{F_1})$ and $b = \size(RU_{F_1})$.
Moreover, by \Cref{prop:rule2} if the state is useful then
\[\left|\size(F_2) - c\right| \le k\]
for $c = \size(F_1)$.
Hence, applying \Cref{lm:abc} with $T = T_2$ immediately implies that there are $O(k^2)$ possibilities for $F_2$, which proves the desired result.
\end{proof}

\subsection{Proof of \Cref{lm:abc}}
Suppose $\size(LU_F)=\ell$ and $\size(RU_F)=r$ for some integers $\ell$ and $r$ within $k$ of $a$ and $b$ respectively.
Then we claim the following algorithm outputs all possible subforests $F$ satisfying the hypotheses of the lemma:

\begin{enumerate}
    \item  Initialize $F=T$ as the given tree.
    \item While $\ell\neq 0$ or $r\neq 0$:
        \begin{enumerate}
            \item If $F$ has only root remaining, delete this root (middle-removal) from $F$
            \item Otherwise, $F$ has more than one root remaining:
            \begin{enumerate}
                \item If $\size(L_F)\le \ell$: remove the leftmost tree and update $\ell\gets \ell- \size(L_F)$
                \item Else if $\size(R_F)\le r$: remove the rightmost tree and update $r\gets r- \size(R_F)$
                \item Otherwise remove the leftmost root $\ell$ times, remove the rightmost root $r$ times, and return $F$ \emph{(unique solution case)}\label{line:unique1}
            \end{enumerate}
        \end{enumerate}
        \item If $F$ has one root remaining: repeatedly remove the only root (middle-removal) until we no longer have a single root. Return all the forests encountered during this procedure as possible solutions of $F$ \emph{(multiple solutions case)} \label{line:multi}
        \item Otherwise, $F$ has more than one root remaining: return $F$ \emph{(unique solution case)} \label{line:unique2}
\end{enumerate}

\begin{figure}[h]
\centering
\def\svgwidth{0.6\linewidth}
\begingroup%
  \makeatletter%
  \providecommand\color[2][]{%
    \errmessage{(Inkscape) Color is used for the text in Inkscape, but the package 'color.sty' is not loaded}%
    \renewcommand\color[2][]{}%
  }%
  \providecommand\transparent[1]{%
    \errmessage{(Inkscape) Transparency is used (non-zero) for the text in Inkscape, but the package 'transparent.sty' is not loaded}%
    \renewcommand\transparent[1]{}%
  }%
  \providecommand\rotatebox[2]{#2}%
  \newcommand*\fsize{\dimexpr\f@size pt\relax}%
  \newcommand*\lineheight[1]{\fontsize{\fsize}{#1\fsize}\selectfont}%
  \ifx\svgwidth\undefined%
    \setlength{\unitlength}{198.42519685bp}%
    \ifx\svgscale\undefined%
      \relax%
    \else%
      \setlength{\unitlength}{\unitlength * \real{\svgscale}}%
    \fi%
  \else%
    \setlength{\unitlength}{\svgwidth}%
  \fi%
  \global\let\svgwidth\undefined%
  \global\let\svgscale\undefined%
  \makeatother%
  \begin{picture}(1,0.78571429)%
    \lineheight{1}%
    \setlength\tabcolsep{0pt}%
    \put(0.76171406,0.56407268){\color[rgb]{0,0,0}\makebox(0,0)[lt]{\lineheight{1.25}\smash{\begin{tabular}[t]{l}$r=2$\end{tabular}}}}%
    \put(0,0){\includegraphics[width=\unitlength,page=1]{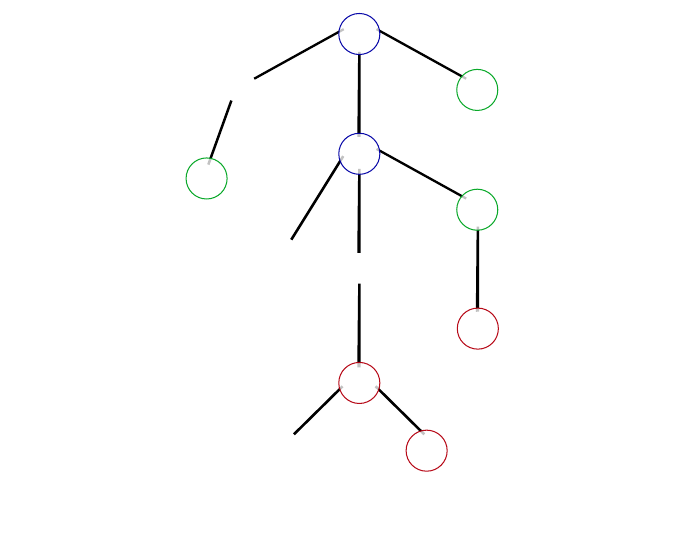}}%
    \put(0.51006235,0.02568654){\color[rgb]{0,0,0}\makebox(0,0)[lt]{\lineheight{1.25}\smash{\begin{tabular}[t]{l}$F$\end{tabular}}}}%
    \put(0.14939221,0.56407268){\color[rgb]{0,0,0}\makebox(0,0)[lt]{\lineheight{1.25}\smash{\begin{tabular}[t]{l}$\ell=2$\end{tabular}}}}%
    \put(0,0){\includegraphics[width=\unitlength,page=2]{ted-unique-sol.pdf}}%
  \end{picture}%
\endgroup%

\caption{An example of the unique solution case, where $\ell=2$ and $r=2$.}
\label{fig:example-unique}
\end{figure}
\begin{figure}[h]
\centering
\def\svgwidth{0.6\linewidth}
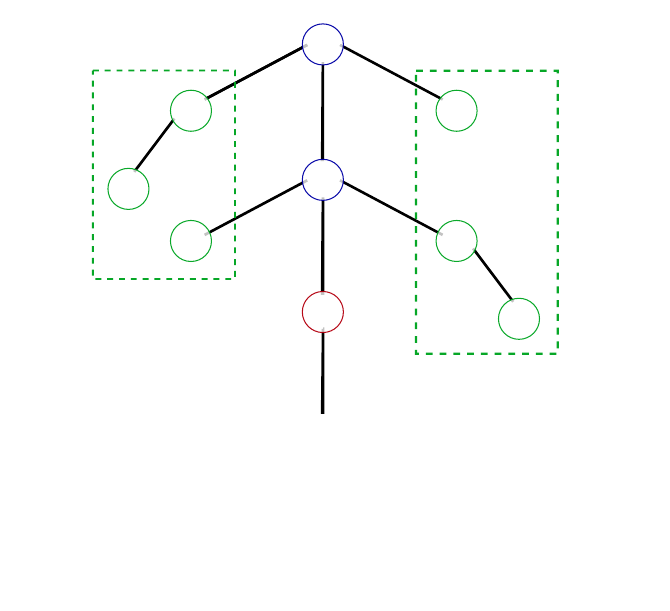
\caption{An example of the multiple solutions case, where $l=3,r=3$. Before executing step \ref{line:multi} of the algorithm, the remaining $F$ consists of $\{u,w,x,y\}$. Then the algorithm returns three possible solutions: $\{u,w,x,y\}, \{w,x,y\}$, and $\{x,y\}$.}
\label{fig:example-multi}
\end{figure}
At each step of the algorithm, we are either at a state with multiple roots or at a state with one root.
In the former case, we have to left-remove or right-remove which we do (unless we have already left-removed $\ell$ times and right-removed $r$ times, in which case we halt). 
In the latter case, if we still have not left-removed or right-removed the full number of times, we must keep middle-removing until we can make left or right removals.
Terminating in one of these states corresponds to the \emph{unique solution} cases of the algorithm (at step \ref{line:unique1} or step \ref{line:unique2}). An example is given in \Cref{fig:example-unique}.

In these situations, the algorithm halts on the unique subforest $F$ of $T$ with $\size(LU_F) = \ell$ and $\size(RU_F) = r$.
Since there are $O(k)$ possible values for $\ell$ and $r$ individually, we get that there are at most $O(k^2)$ distinct subforests $F$ which can be outputted as a ``unique solution'' in the above procedure.

The only other possibility is that we find ourselves in step \ref{line:multi} of the algorithm at a point where we have already left-removed $\ell$ times and right-removed $r$ times, and there is only one root $u$ remaining.
In this case $F$ might not be uniquely determined: we can continue to middle-remove the remaining root for some number of times and then return a possible solution of $F$. 
Formally, let $w$ be the deepest descendant of the remaining root $u$, such that for every node $v$ on the path from $w$ to $u$, $v$ has no siblings. Then, for every such node $v$, the subtree rooted at $v$ (denoted $T_v$) and $T_v-v$ can be a valid solution for $F$. This describes the \emph{multiple solutions case} annotated in step 3 of the above procedure. An example of the multiple solutions case is given in \Cref{fig:example-multi}.

By the above discussion, a subforest $F$ from the multiple solutions case can be 
determined uniquely by the identity of the lowest common ancestor $v = \lca(F)$, and the choice of whether $v$ is in $F$ or not.
We now prove that, over all choices of valid $\ell$ and $r$, there are only $O(k)$ many possibilities for the node $v$. Combined with the unique solution case, this will immediately finish the proof of the lemma.

We first consider the case where, among all possible node choices for $v$, there are two such that neither is an ancestor of the other. 
Then pick the leftmost (with respect to post-order traversal) and the rightmost (with respect to preorder traversal) of such possibilities for $v$, denoted $v_1$ and $v_2$ respectively.
Let $G_1$ and $G_2$ be the subtrees in $T$ rooted at $v_1$ and $v_2$, respectively.
Then by the assumptions on $F$ we necessarily have \[|\size(RU_{G_1})-b|, |\size(RU_{G_2})-b|\le k.\] 
Note that $RU_{G_2}\subseteq RU_{G_1}$.
Write $D = RU_{G_1}\setminus RU_{G_2}$ for the difference of the right upper part of $G_1$ and the right upper part  $G_2$.
Thus, by triangle inequality, we get that
\[\size(D) = \size(RU_{G_1})- \size(RU_{G_2})   \le 2k.\]

By our choice of $v_1$ and $v_2$, we know that any possible choice for $v$ is either a node in  $D$ or an  ancestor of $v_1$. 
For the former case, we have already shown that there are at most $O(k)$ nodes in $D$.
In the latter case, each distinct $v$ which is an ancestor of $v_1$ determines a subforest $F$ of a different size.
Then because we are assuming that $|\size(F) - c| \le k$, there are only $O(k)$ possibilities for the choices of $v$ which are ancestors of $v_1$.

The previous argument applies whenever there are two choices for $v$, neither of which is an ancestor of the other.
If there do not exist such options for $v$, then all possible choices of $v$ lie on the a single root-to-leaf path of $T$. 
By the same reasoning as before, the number of possible cases for $v$ here is again at most $O(k)$, because each $v$ would determine a different-sized subforest and $\size(F)$ is allowed to take on $O(k)$ distinct values.

This completes the proof of \Cref{lm:abc}.
As noted earlier, this implies \Cref{lm:bound-subforests}.
We conclude by tying these results back to our main theorem.

\begin{proof}[Proof of \Cref{thm:bounded-ted}]
Set up a table which can be indexed by pairs of subforests $(F_1, F_2)$ of $T_1$ and $T_2$.
Begin using Klein's dynamic programming approach outlined in \Cref{sec:klein} and \Cref{lm:heavy-light} but avoid generating subproblems according to the pruning rules described in \Cref{prop:rule1} and \Cref{prop:rule2},
and store solutions $\ed(F_1, F_2)$ produced.
In particular, when Klein's algorithm would normally generate a subproblem, we first check if the produced subproblem would be a useful state according to our previous definitions.
\Cref{prop:rule2} and the proof of \Cref{lm:abc} make it clear that we can quickly check if a state is useful provided we know the sizes of $F_1, F_2, LU_{F_1},RU_{F_1}, LU_{F_2}$, and $RU_{F_2}$, and this information can be kept track of easily simply by updating the sizes according to the type of transition we follow in the table.

So, we can compute $\ed(T_1, T_2)$ while only computing $\ed(F_1, F_2)$ for useful states.
By \Cref{lm:heavy-light} there are $O(n\log n)$ possibilities for $F_1$ and by \Cref{lm:bound-subforests} there are $O(k^2)$ choices for $F_2$ for each $F_1$.
So overall we only fill in at most $O(nk^2\log n)$ entries of the DP table.
Since we do a constant amount of work to get the value at each entry of the table, our algorithm has the desired running time.
\end{proof}

\section{Open problems}
\label{sec:open}
For trees of bounded edit distance $k = O(1)$ our algorithm runs in linear time.
However, for larger tree edit distances
$k = \Theta(n)$ our algorithm requires $O(n^3\log n)$ time, which is slower than the fastest known algorithm \cite{demaine2009} for general tree edit distance
by a logarithmic factor.
This motivates the question: can we solve the bounded tree edit distance problem in $O(nk^2)$ time instead of $O(nk^2 \log n)$?

The easier problem of \emph{string} edit distance can be solved in $\tilde O(n+k^2)$ time \cite{Meyers86,LandauV88}, which is quasilinear even for super constant distance parameter $k = O(\sqrt n)$.
This motivates the question of whether it is possible to get similar speedups for tree edit distance. It would be especially interesting to see if the bounded tree edit distance problem can be solved in $\tilde O(n+k^3)$ time. 
Perhaps the suffix tree techniques used in \cite{ShashaZ90} (and discussed in \cite[Appendix]{apxfocs2019}) could prove useful in showing such a result.

Regarding variants of tree edit distance, it remains an open question to get faster algorithms for the harder problem of \emph{unrooted} tree edit distance \cite{Klein98, unrooted2018} (where the elementary operations are edge contraction, insertion, and relabeling) when the distance is bounded by $k$.  
The best known algorithm for unrooted tree edit distance was recently given by Dudek and Gawrychowski \cite{unrooted2018} and runs in $O(n^3)$ time. The previous $O(n^3 \log n)$ time algorithm by Klein \cite{Klein98} also applies to the unrooted setting. Although we extended Klein's algorithm to tackle the rooted tree edit distance problem in $O(nk^2 \log n)$ time, it is not obvious how to extend their approach to the unrooted bounded distance setting.
This is because Klein solves the unrooted version of the problem by dynamic programming over the subproblems generated by all possible rootings of $T_2$.
This is fine for computing general edit distance because the number of subforests over all possible rootings is $O(n^2)$ just like the number of subforests for a fixed rooted tree on $n$ nodes.
However, when the tree edit distance is bounded, the number of possible relevant subproblems over all possible rootings can be $\Omega(n)$ even when $k$ is small.
Although our algorithm can be used to recover a near quadratic time algorithm for unrooted tree edit distance when $k = O(1)$ is constant, it remains open whether we can obtain a quasilinear time algorithm in this setting.

Finally, although general tree edit distance with \emph{arbitrary weights} cannot be solved in truly subcubic time unless certain popular conjectures are false \cite{apsphard2020}, analogous fine-grained hardness results rule out truly subquadratic time algorithms for string edit distance even when deletions and insertions have unit cost \cite{BackursI18}.
Can we show conditional hardness for tree edit distance with unit costs, or can we find a subcubic time algorithm for this problem? 



\bibliography{theory}

\end{document}